\pgfplotsset{compat=1.16}
\newcommand{\defparproblem}[4]{
  \vspace{3mm}
\noindent\fbox{
  \begin{minipage}{.95\columnwidth}
  \begin{tabularx}{0.95\columnwidth}{@{\extracolsep{\fill}}lr} \textsc{#1}\\ \end{tabularx} \\
  {\bf{Input:}} #2  \\
  {\bf{Parameter:}} #3 \\
  {\bf{Goal:}} #4
  \end{minipage}
  }
  \vspace{2mm}
}
\newcommand{\defproblem}[3]{
  \vspace{3mm}
\noindent\fbox{
  \begin{minipage}{.95\columnwidth}
  \begin{tabular*}{\columnwidth}{@{\extracolsep{\fill}}lr} #1  \\ \end{tabular*}
  {\bf{Input:}} #2  \\
  {\bf{Goal:}} #3
  \end{minipage}
  }
  \vspace{2mm}
  }
\newcommand{\GNRM}{{\sc Generalized Noise Role Mining}}
\newcommand{\GenPMatrixApprox}{{\sc Generalized {${\bf P}$}-Matrix Approximation}}
\newcommand{\gd}{{\sf sd}}
\newcommand{\hd}{d_{\rm H}}
\newcommand{\sd}{{\sf sd}}
\newcommand{\brank}{\textsf{BRank}}
\newcommand{\cO}{{\mathcal O}}
\newcommand{\bff}[1]{\mathbf{#1}}
\newcommand{\card}[1]{( ,\leq, )}
\theoremstyle{plain}
\newtheorem{theorem}{Theorem}[section]
\newtheorem{lemma}[theorem]{Lemma}
\newtheorem{proposition}[theorem]{Proposition}
\newtheorem{observation}[theorem]{Observation}
\newtheorem{reduction rule}{Reduction Rule}
\title{Bi-objective Optimization in Role Mining}
\author[1]{Jason Crampton}
\author[1]{Eduard Eiben}
\author[1]{Gregory Gutin}
\author[2]{Daniel Karapetyan}
\author[3]{Diptapriyo Majumdar}
\affil[1]{Royal Holloway, University of London}
\affil[2]{University of Nottingham}
\affil[3]{Indraprastha Institute of Information Technology Delhi}
\begin{document}

\maketitle

\begin{abstract}
Role mining is a technique that is used to derive a role-based authorization policy from an existing policy. Given a set of users $U$, a set of permissions $P$ and a user-permission authorization relation $UPA \subseteq U \times P$, a role mining algorithm seeks to compute a set of roles $R$, a user-role authorization relation $\it UA \subseteq U \times R$ and a permission-role authorization relation $\it PA \subseteq R \times P$, such that the composition of $\it UA$ and $\it PA$ is close (in some appropriate sense) to $\it UPA$. Role mining is therefore a core problem in the specification of role-based authorization policies.
Role mining is known to be hard in general and exact solutions are often impossible to obtain, so there exists an extensive literature on variants of the role mining problem that seek to find approximate solutions and algorithms that use heuristics to find reasonable solutions efficiently.

In this paper, we first introduce the {\em Generalized Noise Role Mining} problem (GNRM) -- a generalization of the {\sc MinNoise Role Mining} problem -- which we believe has considerable practical relevance.
In particular, GNRM can produce ``security-aware'' or ``availability-aware'' solutions.
Extending work of Fomin {et al.}, we show that GNRM is fixed parameter tractable, with parameter $r + k$, where $r$ is the number of roles in the solution and $k$ is the number of discrepancies between $\it UPA$ and the relation defined by the composition of $\it UA$ and $\it PA$.
We further introduce a bi-objective optimization variant of 
GNRM, where we wish to minimize both $r$ and $k$ subject to upper bounds $r\le \bar{r}$ and $k\le \bar{k}$, where $\bar{r}$ and $\bar{k}$ are constants. We show that the Pareto front of this bi-objective optimization problem (BO-GNRM) can be computed in fixed-parameter tractable time with parameter $\bar{r} +\bar{k}$.
From a practical perspective, a solution to BO-GNRM gives security managers the opportunity to identify a mined policy offering the best trade-off between the number of policy discrepancies and the number of roles.

We then report the results of our experimental work using the integer programming solver Gurobi to solve instances of BO-GNRM\@. 
Our key findings are that (a)~we obtained strong support that Gurobi's performance is fixed-parameter tractable, 
(b)~our results suggest that our techniques may be useful for role mining in practice, based on our experiments in the context of three well-known real-world authorization policies. 
We observed that, in many cases, our solver is capable of obtaining optimal solutions when the values of either $k$ or $r$ are small.



\bigskip

\noindent
\textbf{Keywords:} Role Mining, Generalized Noise Role Mining, Fixed-Parameter Tractability
\end{abstract}

\section{Introduction}
\label{sec:intro}

Role-based access control (RBAC)~\cite{SandhuCFY96} is a mature, standardized~\cite{FerraioloSGKC01} and widely deployed means of enforcing authorization requirements in a multi-user computer system.

Authorization policies, in effect, specify which interactions are authorized between users of and resources provided by a computer system.
Such policies are used by the system's access control mechanism to control interactions between users and resources.
RBAC policies authorize users for roles and roles for resources (usually referred to as ``permissions'' in the context of RBAC).
RBAC can significantly reduce the administrative burden of specifying and maintaining authorization policies, provided the set of roles is small compared to the number of users and permissions.

The problem of identifying a suitable set of roles for an RBAC system has been studied extensively over the last 25 years.
{\em Role engineering} is a top-down approach that seeks to identify roles by decomposing and analyzing business processes~\cite{NeumannS02}.
This approach does not generally scale well and requires substantial human effort~\cite{MitraSVA16}.
{\em Role mining}, the bottom-up approach, attempts to discover a set of roles from a given authorization policy that associates users directly with permissions.
More formally, the {\sc Role Mining Problem} is defined as follows:

\defproblem{\sc Role Mining Problem (RMP)}{A set of users $U$, a set of permissions $P$, a user-permission assignment relation ${\it UPA} \subseteq U \times P$, and a natural number $r$.}{Find a set $R$ of at most $r$ roles, a user-role relation ${\it UA} \subseteq U \times R$, a role-permission assignment relation ${\it PA} \subseteq R \times P$ such that $(u, p) \in {\it UPA}$ if and only if there is $\rho \in R$ such that $(u, \rho) \in {\it UA}$ and $(\rho, p) \in {\it PA}$.}

The value of $r$, for solutions that are of practical use, will be small compared to the sizes of $U$ and $P$.
However, it may be impossible to find a solution to RMP in which $r$ is sufficiently small.
Hence, approximate solutions are often sought, in which $r$ is small and the composition of $\it UA$ and $\it PA$ is close, in some suitable sense, to ${\it UPA}$.

A substantial literature now exists on role mining.
The problem is known to be hard in general and usually impossible to solve exactly (assuming the number of roles must be small relative to the number of users), so many approximate and heuristic techniques have been developed (see the survey paper of Mitra {et al.}~\cite{MitraSVA16}).

Recent work by Fomin {et al.}~\cite{FominGP20} has shown that a particlar, well-known variant of the role mining problem is {\em fixed-parameter tractable} (FPT). 
Informally, this variant is NP-hard, like many role mining problems, {so any exact algorithm to solve the problem is unlikely to be polynomial in the size of the problem's input, unless $\textsf{P}=\textsf{NP}$.}
However, there exists an algorithm (an {\em FPT algorithm}) whose running time is exponential in some of the input parameters, but polynomial in the others.
Thus, this algorithm may well be effective if the relevant parameters are small in instances of the problem that arise in practice.

Informally, the problem considered by Fomin {et al.} takes a relation ${\it UPA} \subseteq U \times P$ and natural numbers $r$ and $k$ as input. 
The goal is to find a set of roles $R$ of cardinality less than or equal to $r$, and relations ${\it UA} \subseteq U \times R$ and ${\it PA} \subseteq R \times P$ such that $|{\it UPA} \mathbin{\Delta} ({\it UA} \circ {\it PA})| \leq k$ (where ${\it UA} \circ {\it PA}$ denotes the composition of relations $\it UA$ and $\it PA$ and $\Delta$ denotes symmetric set difference).
In other words, the composition of $\it UA$ and $\it PA$ has to be similar (as defined by $k$) to $\it UPA$.
The assumption is that $k$ and $r$ will be small parameters.
This problem has been studied by the RBAC community and is usually known as the {\sc MinNoise Role Mining Problem} (MNRP)~\cite{MitraSVA16}.

One potential problem with MNRP is that it doesn't distinguish between 
\begin{inparaenum} [(a)]\item an element that is in $\it UPA$ but not in ${\it UA} \circ {\it PA}$ (which means some user is no longer authorized for some permission), and \item an element that is in ${\it UA} \circ {\it PA}$ and not in $\it UPA$ (which means that some user is now incorrectly authorized for some permission). \end{inparaenum}
We believe that in certain situations it will be important to insist that no additional authorizations are introduced by role mining (what we will refer to as \emph{security-aware} role mining), while in other situations we may require that no authorizations are lost by role mining (\emph{availability-aware} role mining).

In this paper, we introduce the {\sc GenNoise Role Mining} (GNRM), of which MNRP is a special case.
Moreover, {\sc Security-aware Role Mining} and {\sc Availability-aware Role Mining} are also special cases. 
We extend the results of Fomin {et al.} by proving that GNRM is also FPT with parameter $k+r$.



Our other theoretical contribution is to introduce a bi-objective optimization version of GNRM, called BO-GNRM, where we wish to minimize both $r$ and $k$ subject to upper bounds $\bar{r}$ and $\bar{k}$, respectively. We show that the BO-GNRM is FPT 
with parameter $\bar{r}+\bar{k}$.
Solving BO-GNRM would allow an organization to select an appropriate solution, based on the needs to balance the number of roles against the number of deviations from the original authorization matrix. Note that in order to solve BO-GNRM we use a one-objective optimization version of GNRM called OO-GNRM.

We designed a mixed-integer formulation of OO-GNRM and built a solution method based on the Gurobi solver.
We then showed that the performance of our solution method is well-aligned with the expectations for an FPT algorithm.
Furthermore, we tested our method on real-world instances; in many cases, it proved the optimality of solutions for instances with small $k$ and/or $r$.


The remainder of this section contains essential background material and defines GNRM and its bi-objective optimization version.
%

%

\subsection{Parameterized complexity}\label{sec:parameterized-complexity}
An instance of a parameterized problem $\Pi$ is a pair $(I,\kappa)$ where $I$ is the {\em main part} and $\kappa$ is the {\em parameter}; the latter is usually a non-negative integer.  
A parameterized problem is {\em fixed-parameter tractable} (FPT) if there exists a computable function $f$ such that any instance $(I,\kappa)$ can be solved in time $\cO(f(\kappa)|{I}|^c)$, where $|I|$ denotes the size of $I$ and $c$ is an absolute constant.
An algorithm to solve the problem with this running time is called an FPT algorithm.
The class of all fixed-parameter tractable decision problems is called {{\sf FPT}}. 
The function $f(x)$ may grow exponentially as $x$ increases, but the  running time may be acceptable if $\kappa$ is small for problem instances that are of practical interest.
We adopt the usual convention of omitting the polynomial factor in $\cO(f(\kappa)|{I}|^c)$ and write $\cO^*(f(\kappa))$ instead.

\subsection{Matrix decomposition and role mining}\label{sec:Br}

A \emph{Boolean matrix} is a matrix in which all entries are either $0$ or $1$.
Let $\vee$ and $\wedge$ denote the usual logical operators on the set $\{0,1\}$.
We extend these operators to Boolean matrices in the natural way~\cite{Kim82}:
\begin{enumerate}
   \item  the sum $\bff{A}\vee \bff{B}$ of Boolean matrices $\bff{A}$ and $\bff{B}$ is computed as usual with addition replaced by $\vee$;
 \item the product $\bff{A}\wedge \bff{B}$ of Boolean matrices $\bff{A}$ and $\bff{B}$ is computed as usual with multiplication replaced by $\wedge$ and addition by $\vee$. 
 {Thus, if $\bff{C}=\bff{A}\wedge \bff{B}$ then $$c_{ij}=\bigvee_{p=1}^na_{ip}\wedge b_{pj},$$ where $n$ is the number of columns in $\bff{A}$ and the number of rows in $\bff{B}$.}
\end{enumerate}
Henceforth all matrices are Boolean, unless specified otherwise.

Any binary relation $X \subseteq Y \times Z$ may be represented by a matrix $\bff{X}$ with rows indexed by $Y$ and columns indexed by $Z$, where $\bff{X}_{ij} = 1$ iff $(i,j) \in X$.
Using matrices we can reformulate the {\sc Role Mining Problem} (RMP) as follows. 
Given a matrix $\bff{UPA}$ and an integer $r$, find a matrix 
$\bff{UA}$ with $r$ columns and a matrix  $\bff{PA}$ with $r$ rows such that $\bff{UPA}=\bff{UA}\wedge \bff{PA}$. 
Thus, role mining may be regarded as a matrix decomposition problem.
 
\subsection{Generalized Noise Role Mining}\label{sec:gnrm}

As previously noted, there is often no solution to RMP if $r$ is small. In such cases, it is helpful to consider an extension of RMP called {\sc Noise Role Mining}~\cite{Dan2017,FominGP20,LuVA08,GillisV15,UzunALV11}, where the input includes a natural number $k$ and our aim is to find a matrix ${\bf UA}$ with $r$ columns and a matrix  ${\bf UA}$ with $r$ rows such that ${\hd}({\bf UPA}, {{\bf UA}\wedge {\bf PA}}) \le k$, where ${\hd}({\bf UPA},{\bf UA}\wedge {\bf PA})$ is  the number of entries in which $\bff{UPA}$ and $\bff{UA} \wedge \bff{PA}$ differ (i.e., the \emph{Hamming distance} between them). 

{\sc Noise Role Mining} could be seen as a rather crude approach to the problem of decomposing ${\bf UPA}$, as it doesn't distinguish between zeroes in ${\bf UPA}$ being replaced with ones in ${\bf UA} \wedge {\bf PA}$ and ones being replaced with zeroes.
In the first case, a user is assigned to a permission that they didn't previously have -- a potential security problem.
In the second case, a user no longer has a permission that they had been assigned, meaning the user may not be able to perform some of their responsibilities -- an availability problem.

Thus, it will be appropriate in many cases to find $\bf UA$ and $\bf PA$ such that either security or availability, as specified by $\bf UPA$ is preserved.
Informally, a refinement of {\sc Noise Role Mining}, then, would be to define {\sc Availability-preserving Role Mining}, where we require ${\bf UPA} \leq {\bf UA} \wedge {\bf PA}$, in the sense that every entry in ${\bf UPA}$ is less than or equal to the corresponding entry in ${\bf UA} \wedge {\bf PA}$.
In other words, every permission authorized by $\bf UPA$ is also authorized by ${\bf UA} \wedge {\bf PA}$.
Similarly, we could define {\sc Security-preserving Role Mining}, where we require ${\bf UPA} \geq {\bf UA} \wedge {\bf PA}$.

An even more fine-grained problem~--~the topic of this paper~--~is {\sc Generalized Noise Role Mining} (GNRM), of which
{\sc Noise Role Mining}, {\sc Availability-preserving Role Mining} and {\sc Security-preserving Role Mining} are all special cases.
In GNRM we specify at the user level whether the decomposition into $\bf UA$ and $\bf PA$ is security-preserving, availability-preserving, neither, or both.

We now introduce some notation to enable us to express GNRM formally. For a positive integer $t$, let $[t]$ denote $\{1,2,\dots ,t\}$.
Let $\bff{A}$ and $\bff{B}$ be $m \times n$ Boolean  matrices and  $\bff{F}$ be an $m \times n$ {\em label matrix} with entries $\bff{f}_{ij}\in \{\top,\bot\}$. 
The matrix $\bff{F}$ is used to define a generalized distance metric between $\bff{A}$ and $\bff{B}$.
For any $(i,j)\in [m]\times [n]$ the $F$-{\em distance from} entry $\bff{a}_{ij}$ {\em to} entry $\bff{b}_{ij}$ of matrices $\bff{A}$ and $\bff{B}$ is 
\begin{align*}
       {\sd}_{\bff{F}}(\bff{a}_{ij},\bff{b}_{ij}) & =
  	\begin{cases}
	\infty	                          & \bff{f}_{ij}=\bot \mbox{ and } \bff{a}_{ij}\ne \bff{b}_{ij},\\
	|\bff{a}_{ij}-\bff{b}_{ij}| & \mbox{otherwise}
	\end{cases}	
\end{align*}
In other words, if we are {\em not} allowed to change $\bff{a}_{ij}$ in order to obtain $\bff{b}_{ij}$ (symbol $\bot$) and  $\bff{a}_{ij}\ne \bff{b}_{ij}$ then the $F$-distance from $\bff{a}_{ij}$ to $\bff{b}_{ij}$ is $\infty$. Otherwise, it is just $|\bff{a}_{ij}-\bff{b}_{ij}|.$

We define ${\gd}_{\bff{F}}(\bff{A},\bff{B}) = \sum_{i=1}^m \sum_{j=1}^n  {\sd}_{\bff{F}}(\bff{a}_{ij},\bff{b}_{ij})$. Thus, the {\em distance} from $\bff{A}$ to $\bff{B}$ is finite if and only if for every $(i,j)\in [m]\times [n]$ such that
$\bff{a}_{ij}\ne \bff{b}_{ij}$ we have $\bff{f}(a_{ij})=\top.$

We now define \GNRM{} as a parameterized problem.

\defparproblem{{\GNRM} (GNRM)}{An $m \times n$ user-permission assignment matrix ${\bf UPA}$, a label matrix $\bff{F}$, and integers $k\ge 0$ and $r\ge 1$.}{$k + r$}{Is there an $m \times r$ user-role assignment matrix ${\bf UA}$, an $r \times n$ role-permission assignment matrix ${\bf PA}$ such that ${\gd}_{\bff{F}}(\bff{UPA},\bff{UA}\wedge \bff{PA}) \leq k$? If the answer is yes, then return such matrices $\bff{UA}$ and $\bff{PA}.$}

Note that GNRM is parameterized by the sum $k + r$.
This is because {\sc Noise Role Mining} parameterized separately by either $k$ or $r$ is intractable: in particular, for $k=0$ we have {\sc Exact Role Mining} which is {\sf NP}-hard \cite{GregoryPJL91}; and for $r=1$, {\sc Noise Role Mining} is  {\sf NP}-hard \cite{Dan2017,GillisV15}. 

Note also that GNRM reduces to:
\begin{itemize}
 \item {\sc Noise Role Mining} if $\bff{f}_{ij}=\top$ for all $(i,j)\in [m]\times [n]$;
 \item {\sc Availability-preserving Role Mining} if $\bff{f}_{ij}=\top$ if and only if $\bff{UPA}_{ij}=0$; and
 \item {\sc Security-preserving Role Mining} if $\bff{f}_{ij}=\top$ if and only if $\bff{UPA}_{ij}=1$.
\end{itemize}
%

\subsection{Bi-objective GNRM}

Note that GNRM is a decision problem, but it is clear that in practice GNRM may be viewed as an optimization problem.  
In such an optimization problem it is natural to minimize two objective functions $r$ and $k$. It is also natural to impose upper bounds on both $r$ and $k$ as if at least one of them too large then the solution may well be of no practical interest. Thus, let $\bar{r}$ and $\bar{k}$ be upper bounds for $r$ and $k$, respectively. 

This leads to the following {\sc bi-objective GNRM} problem (BO-GNRM): minimize $r$ and $k$ subject to $r\le \bar{r}$ and $k\le \bar{k}$ such that yes-answer matrices $\bff{UA}$ and $\bff{PA}$ for GNRM exist. To state BO-GNRM more formally, we will use the terminology below, which is an adaptation of multi-objective optimization terminology, see e.g. \cite{Kais99}, for BO-GNRM. We call a pair $(r',k')$ of integers  a {\em feasible solution of BO-GNRM} if given $\bff{UPA}$, $\bff{F}$ and $r'(\ge 1)$ and $k'(\ge 0)$, the answer for GNRM is yes.
A feasible solution $(r',k')$ of BO-GNRM is {\em Pareto optimal} if there is no feasible solution $(r'',k'')$ such that either $r''<r'$ and $k''\le k'$ or $r''\le r'$ and $k''< k'.$
Formally, the goal of BO-GNRM is to find the {\em Pareto front}, which is the set of all Pareto optimal solutions.

Note that bi-objective optimization has already been used for other access control problems, see e.g. \cite{CramptonGKW17,CurreyMDG20}. 

\vspace{2mm}

\paragraph{Paper organization}The rest of the paper is organized in the following way.
In Section~\ref{sec:GNRMalgo} we describe our FPT algorithms for solving GNRM and BO-GNRM\@.
We describe and discuss our experimental results in Sections~\ref{sec:GNRM-solver},~\ref{sec:is-fpt-like-runtime}, and~\ref{sec:real-world}.
Finally, in Section~\ref{sec:conclusion}, we conclude the paper with a summary of our contributions and ideas for future work.

A preliminary version of this paper~\cite{CramptonEGKM22} has appeared in proceedings of SACMAT-2022. Sections~\ref{sec:opt}, \ref{sec:GNRM-solver}, \ref{sec:is-fpt-like-runtime} and~\ref{sec:real-world} consist of new material, not published in~\cite{CramptonEGKM22}.

\section{FPT algorithms for GNRM and BO-GNRM}
\label{sec:GNRMalgo}

Fomin et al.~\cite{FominGP20} proved that {\sc Noise Role Mining} parameterized by $k + r$ is {\sf FPT}.
We will extend this result to GNRM by reducing it to the {\sc Generalized $\bf P$-matrix Approximation} problem. 
The reduction is similar to the one used in \cite{FominGP20}. 
However, Fomin et al.~\cite{FominGP20} solved {\sc Noise Role Mining} as a decision problem where the aim is only to decide whether the given instance is a yes- or no-instance. In contrast, we solve {\GNRM} as a  problem where if the given instance is a yes-instance then a solution is also returned.

We first define {\sc Generalized $\bf P$-matrix Approximation} and prove that it is FPT.
We then explain how this problem is used to establish that GNRM is FPT.

\subsection{Generalized $\bf P$-matrix approximation}

Let $\bff{P}$ be a $p \times q$ matrix (sometimes called a {\em pattern} matrix). We say that an $m \times n$ matrix $\bff{B}$ is a {\em $\bff{P}$-matrix} if there is a partition $\{I_1,\ldots,I_p\}$ of $[m]$ and a partition $\{J_1,\ldots,J_q\}$ of $[n]$ such that for every $i \in [p]$, $j \in [q]$, $s \in I_i, t \in J_j$, we have $b_{st} = p_{ij}$. 
Note that, by definition, every set in the partitions of $[m]$ and $[n]$ is non-empty. 
(Thus, $p\le m$ and $q\le n$.) In other words, $\bff{B}$ is a $\bff{P}$-matrix if $\bff{P}$ can be obtained from $\bff{B}$ by first permuting rows and columns, then partitioning the resulting matrix into blocks such that in each block $\bff{L}$ all entries are of the same value $v(\bff{L})$ and finally replacing every block $\bff{L}$ by one entry of value $v(\bff{L})$.
%

For a example, let $\bff{P} = \begin{bmatrix} 1 & 0 \\ 1 & 1 \end{bmatrix}$. 
Then $\bff{Q}_1$ and $\bff{Q}_2$ below are both $\bff{P}$-matrices: permuting columns 2 and 3 in each matrix, then partitioning (into blocks of equal size for $\bff{Q}_1$, and between rows 1 and 2 and columns 2 and 3 for $\bff{Q}_2$) and ``contracting'' gives us $\bff{P}$.
\[ 
 \bff{Q}_1 = 
 \begin{bmatrix}
  1 & 0 & 1 & 0 \\
  1 & 0 & 1 & 0 \\
  1 & 1 & 1 & 1 \\
  1 & 1 & 1 & 1 \\
 \end{bmatrix}
\qquad
 \bff{Q}_2 = 
 \begin{bmatrix}
  1 & 0 & 1 & 0 & 0 & 0 \\
  1 & 1 & 1 & 1 & 1 & 1 \\
  1 & 1 & 1 & 1 & 1 & 1 \\     
 \end{bmatrix}
\]

\defproblem{\GenPMatrixApprox}{An $m \times n$ matrix $\bff{A}$, a label matrix $\bff{F}$, a $p \times q$  matrix $\bff{P}$, and a nonnegative integer $k$.}{Is there an $m \times n$ $\bff{P}$-matrix $\bff{B}$ such that ${\gd}_F(\bff{A}, \bff{B}) \leq k$? If the answer is yes, then return such a matrix $\bff{B}.$ }

Very informally, this problem asks whether there exists a matrix $\bff{B}$ that is  (almost) the same as $\bff{A}$ and contains the rows and columns of $\bff{P}$, and, if so, returns $\bff{B}$. 
Fomin et al.~\cite{FominGP20} used the special case of {\GenPMatrixApprox}, where $\bff{f}_{ij}=\top$ for every $(i,j)\in [m]\times [n].$
It is called the {\sc $\bf P$-Matrix Approximation} problem.  We will use the following two results by Fomin et al. \cite{FominGP20}.

\begin{observation}
\label{obs:distinct-rows-columns}
Let $\bff{P}$ be a $p \times q$ matrix. Then, every $\bff{P}$-matrix $\bff{B}$ has at most $p$ pairwise distinct rows and at most $q$ pairwise distinct columns.
\end{observation}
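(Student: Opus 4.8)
The plan is to read the conclusion off directly from the partition structure that defines a $\bff{P}$-matrix. Suppose $\bff{B}$ is an $m \times n$ $\bff{P}$-matrix, witnessed by a partition $\{I_1,\dots,I_p\}$ of $[m]$ and a partition $\{J_1,\dots,J_q\}$ of $[n]$ such that $b_{st} = p_{ij}$ whenever $s \in I_i$ and $t \in J_j$. The key observation to establish is that two row indices lying in the same part of $\{I_1,\dots,I_p\}$ index identical rows of $\bff{B}$, and symmetrically for columns.

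First I would prove the claim for rows. Take $s, s' \in I_i$ for some $i \in [p]$. For an arbitrary column index $t \in [n]$, let $j \in [q]$ be the unique index with $t \in J_j$ (unique because $\{J_1,\dots,J_q\}$ is a partition). Then $b_{st} = p_{ij} = b_{s't}$. Since this holds for every $t \in [n]$, rows $s$ and $s'$ of $\bff{B}$ are equal as vectors. Consequently the map sending each part $I_i$ to the common value of its rows is surjective onto the set of distinct rows of $\bff{B}$, so the number of pairwise distinct rows of $\bff{B}$ is at most the number of parts, which is $p$. The argument for columns is entirely symmetric: if $t, t' \in J_j$ then for every $s \in [m]$, picking $i$ with $s \in I_i$ gives $b_{st} = p_{ij} = b_{st'}$, so columns $t$ and $t'$ coincide, and $\bff{B}$ therefore has at most $q$ pairwise distinct columns.

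There is no genuine obstacle here: the statement is an immediate consequence of the block structure forced by the definition of a $\bff{P}$-matrix, and the only points requiring a moment's care are that $\{J_1,\dots,J_q\}$ (resp. $\{I_1,\dots,I_p\}$) being a partition is what lets us speak of \emph{the} index $j$ with $t \in J_j$, and that ``at most $p$'' (rather than ``exactly $p$'') is the right bound because distinct parts of the row partition may still induce the same row value when $\bff{P}$ itself has repeated rows.
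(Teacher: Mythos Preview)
Your proof is correct and is exactly the natural argument from the definition of a $\bff{P}$-matrix. The paper does not actually prove this observation; it is stated without proof and attributed to Fomin et al.~\cite{FominGP20}, so there is nothing to compare against beyond noting that your argument is the standard one.
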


\begin{proposition}
\label{prop:algo-testing-P-matrix}
\sloppy
Given an $m \times n$ matrix $\bff{A}$ and a $p \times q$ matrix $\bff{P}$, there is an algorithm that runs in time $2^{p \log p + q \log q} (nm)^{\cO(1)}$ and correctly outputs whether $\bff{A}$ is a $\bff{P}$-matrix. 
\end{proposition}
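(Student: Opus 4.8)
The plan is to reduce the test to a bounded search over ``type maps'' between the rows/columns of $\bff{P}$ and the distinct rows/columns of $\bff{A}$, followed by a polynomial-time consistency check on a compressed matrix. First, invoking Observation~\ref{obs:distinct-rows-columns}, I would compute the pairwise distinct rows and pairwise distinct columns of $\bff{A}$; if there are more than $p$ distinct rows or more than $q$ distinct columns, answer that $\bff{A}$ is not a $\bff{P}$-matrix. Otherwise, let $p'\le p$ and $q'\le q$ be the numbers of distinct rows (the ``row-types'') and columns (the ``column-types''), let $n_a$ be the number of rows of type $a$ and $m_b$ the number of columns of type $b$, and let $\bff{M}$ be the $p'\times q'$ matrix whose $(a,b)$ entry is the (well-defined) common value of $\bff{A}$ on the rows of type $a$ and columns of type $b$. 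All of this takes $(mn)^{\cO(1)}$ time.

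The core claim I would prove is: $\bff{A}$ is a $\bff{P}$-matrix if and only if there exist maps $\rho:[p]\to[p']$ and $\gamma:[q]\to[q']$ with (i) $p_{ij}=\bff{M}_{\rho(i),\gamma(j)}$ for all $i\in[p],\,j\in[q]$; (ii) $1\le|\rho^{-1}(a)|\le n_a$ for all $a\in[p']$; and (iii) $1\le|\gamma^{-1}(b)|\le m_b$ for all $b\in[q']$. For the ``only if'' direction, given witnessing partitions $\{I_1,\dots,I_p\}$ and $\{J_1,\dots,J_q\}$, all rows in a single part $I_i$ coincide, so they share a common type which I take as $\rho(i)$ (and analogously $\gamma(j)$ for columns); that the partition covers $[m]$ gives the lower bounds in (ii)--(iii), disjointness of the parts lying inside one type gives the upper bounds, and reading off a single block entry gives (i). For the ``if'' direction, from such $\rho,\gamma$ I build $\{I_1,\dots,I_p\}$ by splitting the $n_a$ rows of type $a$ into $|\rho^{-1}(a)|$ nonempty groups --- possible precisely because $|\rho^{-1}(a)|\le n_a$ --- and assigning the $\ell$-th group to the $\ell$-th index of $\rho^{-1}(a)$, handling columns symmetrically; condition (i) then guarantees that every block of the resulting matrix is the constant value demanded by $\bff{P}$. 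I expect this equivalence to be the one place requiring genuine care, since it is exactly here that the ``nonempty parts'' requirement interacts with possible repeated rows or columns of $\bff{P}$ --- which is what the counting bounds in (ii)--(iii) capture; everything else is bookkeeping.

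Finally, the algorithm enumerates all pairs $(\rho,\gamma)$: there are $(p')^p\le p^p=2^{p\log p}$ choices for $\rho$ and $(q')^q\le q^q=2^{q\log q}$ for $\gamma$, and for each pair conditions (i)--(iii) are checkable in $(mn)^{\cO(1)}$ time; the algorithm answers ``yes'' iff some pair passes. Correctness is immediate from the claim, and the total running time is $2^{p\log p+q\log q}(mn)^{\cO(1)}$, as required. (Strictly, the initial distinct-rows/columns check is not needed for correctness --- if $p'>p$ then no map $[p]\to[p']$ can satisfy (ii) --- but it is the natural place to use Observation~\ref{obs:distinct-rows-columns} and it keeps the enumeration over $(\rho,\gamma)$ well defined.)
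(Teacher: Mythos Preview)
The paper does not actually prove Proposition~\ref{prop:algo-testing-P-matrix}; it is imported as one of the ``two results by Fomin et al.~\cite{FominGP20}'' and stated without argument. Your proposal therefore supplies a proof the paper omits, and it is correct. The equivalence via the maps $\rho,\gamma$ is sound: once one observes that all rows in a single block $I_i$ of a witnessing partition must coincide (hence have a common row-type), the surjectivity and multiplicity bounds in (ii)--(iii) are precisely what reconcile the ``nonempty parts'' requirement in the definition of a $\bff{P}$-matrix with possible repeated rows/columns in $\bff{P}$, and the converse construction by splitting each type class into the right number of nonempty groups is straightforward. The enumeration bound $(p')^p(q')^q\le p^p q^q=2^{p\log p+q\log q}$ then gives the claimed running time. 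Since the paper offers no proof of its own, there is nothing further to compare against.
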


If $\bff{A}$ has at most $p-1$ rows or at most $q-1$ columns, then there is no $m\times n$ matrix $\bff{B}$ that is a $\bff{P}$-matrix and ${\gd}_{\bff{F}}(\bff{A}, \bff{B}) \leq k$.
In that case, the instance is a no-instance.
Let us now assume that $\bff{A}$ has at least $p$ rows and at least $q$ columns.

The next lemma was proved in \cite{FominGP20} for  {\sc $\bf P$-Matrix Approximation}. Note that replacing $\top$ in $\bff{ f}_{ij}=\top$ by 
$\bot$ for some entries $\bff{ f}_{ij}$ will only reduce the set of yes-instances of {\GenPMatrixApprox}. Thus, the next lemma follows from its special case in \cite{FominGP20}.

\begin{lemma}
\label{lemma:safeness-distinct-row-column-rule}
If $\bff{A}$ has at least $p + k + 1$ pairwise distinct rows or at least $q + k + 1$ pairwise distinct columns, then output that $(\bff{A},\bff{F}, \bff{P},k)$ is a no-instance of {\GenPMatrixApprox}.
\end{lemma}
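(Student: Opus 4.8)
The plan is to prove the contrapositive directly from Observation~\ref{obs:distinct-rows-columns} together with a one-line counting argument; the label matrix $\bff{F}$ will enter only through the trivial bound $\hd(\bff{A},\bff{B})\le {\gd}_{\bff{F}}(\bff{A},\bff{B})$, which holds because every summand ${\sd}_{\bff{F}}(\bff{a}_{ij},\bff{b}_{ij})$ is at least $|\bff{a}_{ij}-\bff{b}_{ij}|$. Since the column statement follows from the row statement by transposing $\bff{A},\bff{B},\bff{F},\bff{P}$ simultaneously, I would only argue the case in which $\bff{A}$ has at least $p+k+1$ pairwise distinct rows.

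So suppose, toward a contradiction, that in this case $(\bff{A},\bff{F},\bff{P},k)$ is nonetheless a yes-instance, witnessed by an $m\times n$ $\bff{P}$-matrix $\bff{B}$ with ${\gd}_{\bff{F}}(\bff{A},\bff{B})\le k$. First I would note $\hd(\bff{A},\bff{B})\le k$, so $\bff{A}$ and $\bff{B}$ differ in at most $k$ entries and hence in at most $k$ rows; consequently at least $m-k$ row indices $s$ have row $s$ of $\bff{B}$ equal to row $s$ of $\bff{A}$. Next, fixing a set $S$ of exactly $p+k+1$ row indices whose $\bff{A}$-rows are pairwise distinct (which exists by hypothesis), at most $k$ of the indices in $S$ can be among the at most $k$ changed rows, so at least $p+1$ indices $s\in S$ have row $s$ of $\bff{B}$ equal to row $s$ of $\bff{A}$. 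These $p+1$ rows of $\bff{B}$ are pairwise distinct, since the corresponding $\bff{A}$-rows are, so $\bff{B}$ has at least $p+1$ pairwise distinct rows — contradicting Observation~\ref{obs:distinct-rows-columns}. Hence no such $\bff{B}$ exists and the instance may safely be reported as a no-instance.

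I do not expect a genuine obstacle here: the only two steps needing a moment of care are (i) observing that a ${\gd}_{\bff{F}}$-budget of $k$ caps the number of modified rows (and columns) by $k$, and (ii) the pigeonhole count of the ``surviving'' distinct rows. For completeness I would also remark that the lemma can be obtained with no new work at all: turning any $\top$ entry of $\bff{F}$ into $\bot$ can only shrink the family of $\bff{P}$-matrices at finite distance from $\bff{A}$, hence only shrinks the set of yes-instances, so the statement follows immediately from the corresponding safeness lemma for {\sc $\bf P$-Matrix Approximation} established by Fomin et al.~\cite{FominGP20}. My preference would be to present the short self-contained argument above as the proof and keep the reduction as a one-sentence remark.
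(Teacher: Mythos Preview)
Your proposal is correct. The paper itself takes only the one-sentence route you mention at the end: it observes that changing any $\top$ in $\bff{F}$ to $\bot$ can only shrink the set of yes-instances, and then defers to the corresponding lemma for {\sc $\bf P$-Matrix Approximation} proved by Fomin et al.~\cite{FominGP20}. Your preferred self-contained argument via Observation~\ref{obs:distinct-rows-columns} and the pigeonhole count on surviving distinct rows is exactly the argument that underlies the cited special case, so the two approaches differ only in packaging: the paper outsources the counting step to~\cite{FominGP20}, while you spell it out. Your version has the advantage of being self-contained; the paper's version is shorter but relies on the reader accepting the monotonicity-in-$\bff{F}$ remark and chasing the reference.
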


This lemma implies the following reduction/preprocessing rule.

\begin{reduction rule}
\label{rule:distinct-row-column-rule}
Let $\bff{A}$ be a matrix. 
If $\bff{A}$ has at least $p + k + 1$ pairwise distinct rows or at least $q + k + 1$ pairwise distinct columns, then output that $(\bff{A},\bff{F},\bff{P},k)$ is a no-instance of {\GenPMatrixApprox}.
\end{reduction rule}


To simplify an instance $(\bff{A},\bff{F}, \bff{P},k)$ of {\GenPMatrixApprox},  we can apply the following reduction rule exhaustively. 
If a row $\bff{a}_i$ is deleted by the reduction rule, the label matrix $\bff{F}$ does not change for the other rows.
This means that for the reduced instance with matrix  $\bff{A}'$, the label matrix is $\bff{F}$ restricted to the rows of  $\bff{A}'$ i.e. $\{\bff{a}_1,\ldots,\bff{a}_m\} \setminus \{\bff{a}_i \}$.
For simplicity of presentation, the label matrix for $\bff{A}'$ will still be denoted by $\bff{F}$. 

We define a second reduction rule that is used to delete superfluous identical rows and columns.

\begin{reduction rule}
\label{rule:identical-row-column-rule}
If $\bff{A}$ has at least $\max \{p, k\} + 2$ identical rows, then delete one of these identical rows.
Similarly, if $\bff{A}$ has at least $\max \{q, k\} + 2$ identical columns, then delete one of these identical columns.
\end{reduction rule}

We say two instances of {\GenPMatrixApprox} are {\em equivalent} if they are both either yes-instances or no-instances.  
Fomin {et al.} proved that any application of Reduction Rule \ref{rule:identical-row-column-rule} to an instance of {\sc $\bf P$-Matrix Approximation} returns an equivalent instance of the problem~\cite[Claim 7]{FominGP20}. 
It is easy to verify that the arguments in their proof of Claim 7 also apply to {\GenPMatrixApprox}.

Applications of the two reductions rules described above either determine that the input instance is a no-instance or produce an equivalent instance with the following properties.

\begin{lemma}
\label{lemma:preprocessing-part}
Let $(\bff{A},\bff{F}, \bff{P},k)$ be an instance of {\GenPMatrixApprox}.
Then, there exists a polynomial-time algorithm that either returns ``no-instance'' or
transforms $(\bff{A},\bff{F}, \bff{P},k)$ into an equivalent instance $(\bff{A}', \bff{F}, \bff{P},k)$ of {\GenPMatrixApprox}. Moreover the following properties are satisfied.
\begin{enumerate}
	\item\label{preprocessing-property:number-rows-and-columns} The matrix $\bff{A}'$ has at least $p$ rows, at least $q$ columns, at most $(\max\{p,k\}+1)(p+k)$ rows and at most $(\max\{p,k\}+1)(p+k)$ columns.
	\item\label{preprocessing-property:poly-time} Given a $\bff{P}$-matrix $\bff{B}'$ such that ${\gd}_{\bff{F}}(\bff{A}', \bff{B}') \leq k$, in polynomial time we can compute a $\bff{P}$-matrix $\bff{B}$ such that ${\gd}_{\bff{F}}(\bff{A}, \bff{B}) \leq k$.
\end{enumerate}
\end{lemma}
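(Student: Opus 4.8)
\medskip
\noindent\textbf{Proof proposal.}
The plan is for the algorithm to apply Reduction Rule~\ref{rule:distinct-row-column-rule} once and then apply Reduction Rule~\ref{rule:identical-row-column-rule} exhaustively, after which the two stated properties are verified. First I would check whether $\bff{A}$ has at least $p+k+1$ pairwise distinct rows or at least $q+k+1$ pairwise distinct columns; if so, return ``no-instance'' (sound by Lemma~\ref{lemma:safeness-distinct-row-column-rule}). Otherwise $\bff{A}$ has at most $p+k$ distinct row types and at most $q+k$ distinct column types. Then I would repeatedly delete one row from any group of at least $\max\{p,k\}+2$ mutually identical rows, and one column from any group of at least $\max\{q,k\}+2$ mutually identical columns, restricting $\bff{F}$ accordingly, until no such group remains; since each deletion decreases the number of rows plus columns, this terminates after at most $m+n$ steps and is polynomial overall. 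Let $(\bff{A}',\bff{F},\bff{P},k)$ be the result. Equivalence with the input instance follows from Lemma~\ref{lemma:safeness-distinct-row-column-rule} together with the quoted fact that each application of Reduction Rule~\ref{rule:identical-row-column-rule} preserves equivalence for {\GenPMatrixApprox}.

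For Property~\ref{preprocessing-property:number-rows-and-columns}: deleting rows never creates a new row type, so $\bff{A}'$ still has at most $p+k$ distinct row types, and since the procedure stopped, each such type is represented at most $\max\{p,k\}+1$ times; hence $\bff{A}'$ has at most $(\max\{p,k\}+1)(p+k)$ rows, and the analogous argument bounds the number of columns. For the lower bounds, recall that we may assume $\bff{A}$ has at least $p$ rows and at least $q$ columns; a row is deleted only from a group of size at least $\max\{p,k\}+2\ge p+2$, after which at least $p+1$ rows remain, so the row count never drops below $p$ (symmetrically for columns).

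For Property~\ref{preprocessing-property:poly-time}: given a $\bff{P}$-matrix $\bff{B}'$ with ${\gd}_{\bff{F}}(\bff{A}',\bff{B}')\le k$, I would reconstruct $\bff{B}$ by undoing the deletions in reverse order, maintaining the invariant ``the current matrix is a $\bff{P}$-matrix at $\gd_{\bff{F}}$-distance at most $k$ from the corresponding stage of $\bff{A}$''. Consider undoing the deletion of a column $y$ that, at the stage it was removed, belonged to a set $S$ of at least $\max\{q,k\}+2$ mutually identical columns; after removal, at least $\max\{q,k\}+1$ members of $S$ remain, all equal to a common vector $w$, and $w$ is also the value of column $y$ at that stage of $\bff{A}$. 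Let $\bff{B}''$ be the $\bff{P}$-matrix currently in hand; since $\gd_{\bff{F}}(\cdot,\bff{B}'')\le k<\infty$, no entry incurs the value $\infty$, so every discrepancy costs exactly $1$, and if all of the at least $\max\{q,k\}+1$ surviving columns of $S$ differed from their counterparts in $\bff{B}''$ there would be more than $k$ discrepancies, a contradiction; hence some surviving column $t^\ast\in S$ of $\bff{B}''$ equals $w$. I then insert at position $y$ a new column equal to $w$: the result is still a $\bff{P}$-matrix (the new index joins the block of $t^\ast$ in the column partition, leaving all blocks non-empty and the pattern unchanged), and since the new column equals column $y$ of $\bff{A}$ at that stage it contributes nothing to the distance, so the invariant is preserved. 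Rows are handled symmetrically. After all (at most $m+n$) undo steps we obtain a $\bff{P}$-matrix $\bff{B}$ of the same dimensions as $\bff{A}$ with $\gd_{\bff{F}}(\bff{A},\bff{B})\le k$, computed in polynomial time.

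The step I expect to require the most care is Property~\ref{preprocessing-property:poly-time}. One must undo the deletions in exactly the reverse order, since whether a set of columns (resp.\ rows) is ``identical'' is meaningful only relative to the current row (resp.\ column) set; and one must check that re-insertion preserves the precise structure of a $\bff{P}$-matrix (a partition into exactly $p$ non-empty row blocks and $q$ non-empty column blocks consistent with $\bff{P}$). This is exactly where the counting argument ``an over-large group of identical rows or columns must contain a member left unchanged by any witness'' is needed, together with the observation that a finite $\gd_{\bff{F}}$-distance forces every discrepancy to cost $1$ regardless of the labels in $\bff{F}$ --- which is what makes the labelled version go through just as the unlabelled one of Fomin et al.\ does.
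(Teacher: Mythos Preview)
Your proposal is correct and follows essentially the same approach as the paper: apply Reduction Rule~\ref{rule:distinct-row-column-rule}, then Reduction Rule~\ref{rule:identical-row-column-rule} exhaustively, and reconstruct $\bff{B}$ from $\bff{B}'$ by undoing deletions in reverse order using the pigeonhole argument that among the $\ge k+1$ surviving identical rows/columns at least one must be unchanged in the witness. Your treatment is in fact somewhat more careful than the paper's---you make explicit why the $\bff{P}$-matrix partition structure is preserved upon re-insertion and why a finite $\gd_{\bff{F}}$-distance forces each discrepancy to cost exactly~$1$, which is precisely what makes the labelled version go through.
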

\begin{proof}
Let $(\bff{A}, \bff{P}, k)$ be an input instance of {\GenPMatrixApprox}.
As $m \geq p$ and $n \geq q$, if $\bff{A}$ has at most $p-1$ rows or has at most $q-1$ columns, then there is no $m \times n$ $\bff{P}$-matrix $\bff{B}$ such that ${\gd}_{\bff{F}}(\bff{A}, \bff{B}) \leq k$.
In such a case, we return ``no-instance.''
Next, we apply Reduction Rule~\ref{rule:distinct-row-column-rule} to check the number of pairwise distinct rows as well as the number of pairwise distinct columns in $\bff{A}$.
If $\bff{A}$ has $p+k+1$ pairwise distinct rows or $q + k + 1$ pairwise distinct columns, then we return ``no-instance''.
After that, we apply Reduction Rule~\ref{rule:identical-row-column-rule} exhaustively and let $\bff{A}'$ be the obtained matrix. We also obtain a stack $S$ which contains all deleted rows and columns. 

We return $(\bff{A}',\bff{F},\bff{P},k)$ as the output instance.
Clearly, $\bff{A}'$ has at most $p + k$ pairwise distinct rows and at most $q + k$ pairwise distinct columns. Moreover, $\bff{A}'$ has at least $p$ rows and at least $q$ columns. Also, $\bff{A}'$ can have at most $\max\{p,k\} + 1$ pairwise identical rows and at most $\max\{q, k\} + 1$ pairwise identical columns.
This means that $\bff{A}'$ has at most $(\max\{p,k\} + 1) (p + k)$ rows and at most $(\max\{q, k\} + 1)(q+k)$ columns. This completes the proof that property~(\ref{preprocessing-property:number-rows-and-columns}) holds.

Suppose that $\bff{B}'$ is a $\bff{P}$-matrix such that ${\gd}_{\bff{F}}(\bff{A}',\bff{B}') \leq k$.
Note that at any intermediate stage, when a row $\bff{r}$ (a column $\bff{c}$, respectively) was deleted from $\bff{A}$, there were at least $(\max\{p,k\} + 1)$ additional rows identical to $\bff{r}$ (at least $(\max\{q,k\} + 1)$ additional columns identical to $\bff{c},$ respectively).
Hence, in $\bff{A}'$, if a row $\bff{r}$ or a column $\bff{c}$ was deleted by Reduction Rule~\ref{rule:identical-row-column-rule}, then there are exactly $\max\{p,k\} + 1$ rows identical to $\bff{r}$  (exactly $\max\{q,k\} + 1$ columns identical to $\bff{c}$, respectively). Since ${\gd}_{\bff{F}}(\bff{A}', \bff{B}') \leq k$, at most $k$ entries with label $\top$ identical to $\bff{r}$ ($\bff{c}$, respectively) were modified in $\bff{B}'$.
Thus, $\bff{B}'$ must have at least one row identical to $\bff{r}$ (at least one column identical to $\bff{c}$, respectively) if $\bff{r}$ ($\bff{c}$, respectively) was deleted by Reduction Rule~\ref{rule:identical-row-column-rule}.
Therefore, the deleted rows (columns, respectively) are identical to some rows (columns, respectively) which are the same in  in $\bff{A}'$ and $\bff{B}'$. Thus, reinstating the deleted rows and columns using stack $S$, we obtain matrices $\bff{A}$
and $\bff{B}$ such that $\bff{B}$ is a $\bff{P}$-matrix and ${\gd}_{\bff{F}}(\bff{A}, \bff{B}) \leq k$.
\end{proof}

%

\begin{theorem}
\label{lemma:FPT-P-matrix-approx}
\sloppy
{\GenPMatrixApprox} can be solved in time $2^{p\log p + q \log q}(nm)^{\cO(1)} ((\max\{p, k \} + 1)(p + k)(\max\{q, k\} + 1)(q + k))^k$.
\end{theorem}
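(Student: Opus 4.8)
The plan is to kernelize with Lemma~\ref{lemma:preprocessing-part} and then brute-force over the (at most $k$) entries in which a near-optimal $\bff{P}$-matrix can differ from the kernel, testing each candidate with Proposition~\ref{prop:algo-testing-P-matrix}.

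First I would invoke the polynomial-time algorithm of Lemma~\ref{lemma:preprocessing-part} on the input $(\bff{A},\bff{F},\bff{P},k)$. It either reports a no-instance (in which case we are done), or outputs an equivalent instance $(\bff{A}',\bff{F},\bff{P},k)$ with $\bff{A}'$ having at most $N_r:=(\max\{p,k\}+1)(p+k)$ rows and at most $N_c:=(\max\{q,k\}+1)(q+k)$ columns, together with the guarantee (property~\ref{preprocessing-property:poly-time}) that any $\bff{P}$-matrix $\bff{B}'$ with ${\gd}_{\bff{F}}(\bff{A}',\bff{B}')\le k$ lifts in polynomial time to a $\bff{P}$-matrix $\bff{B}$ with ${\gd}_{\bff{F}}(\bff{A},\bff{B})\le k$. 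Since $\bff{A}'$ is obtained from $\bff{A}$ only by deleting rows and columns, it also has at most $m$ rows and at most $n$ columns, so running any $(\mathrm{rows}\cdot\mathrm{cols})^{\cO(1)}$-time subroutine on $\bff{A}'$ costs $(nm)^{\cO(1)}$.

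Next I would use an exact characterization of the $\bff{P}$-matrices close to $\bff{A}'$. For a set $S$ of positions of $\bff{A}'$, all carrying the label $\top$, let $\bff{B}_S$ be obtained from $\bff{A}'$ by flipping exactly the entries in $S$; then ${\gd}_{\bff{F}}(\bff{A}',\bff{B}_S)=|S|$. Conversely, if ${\gd}_{\bff{F}}(\bff{A}',\bff{B}')\le k$ for some matrix $\bff{B}'$, then the set $S$ of positions where $\bff{A}'$ and $\bff{B}'$ disagree satisfies $|S|\le k$, every position of $S$ has label $\top$ (otherwise the distance would be $\infty$), and $\bff{B}'=\bff{B}_S$. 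Hence the $\bff{P}$-matrices within $F$-distance $k$ of $\bff{A}'$ are precisely those $\bff{B}_S$, over $\top$-labelled sets $S$ with $|S|\le k$, that happen to be $\bff{P}$-matrices. The algorithm therefore enumerates all such sets $S$ --- there are at most $\sum_{i=0}^{k}\binom{N_rN_c}{i}\le(N_rN_c+1)^k$ of them, which is within a constant factor of $(N_rN_c)^k$ --- and for each one forms $\bff{B}_S$ and runs the algorithm of Proposition~\ref{prop:algo-testing-P-matrix} to decide whether $\bff{B}_S$ is a $\bff{P}$-matrix. If some $\bff{B}_S$ is, then (as ${\gd}_{\bff{F}}(\bff{A}',\bff{B}_S)=|S|\le k$) we have a yes-instance and we return the lift of $\bff{B}_S$ via property~\ref{preprocessing-property:poly-time}; if none is, we report a no-instance.

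Finally I would assemble the running time: the kernelization and the final lift cost $(nm)^{\cO(1)}$; forming and enumerating the candidate sets contributes only a polynomial (in $N_rN_c$, hence in $nm$) overhead per candidate; there are $\cO((N_rN_c)^k)$ candidates; and each $\bff{P}$-matrix test costs $2^{p\log p+q\log q}(nm)^{\cO(1)}$ by Proposition~\ref{prop:algo-testing-P-matrix} applied to the (at most $m\times n$) matrix $\bff{B}_S$. Multiplying yields $2^{p\log p+q\log q}(nm)^{\cO(1)}\big((\max\{p,k\}+1)(p+k)(\max\{q,k\}+1)(q+k)\big)^k$, as claimed. I do not expect a genuine obstacle here: the only points needing care are (i) restricting $S$ to $\top$-labelled positions so that ${\gd}_{\bff{F}}$ stays finite and the candidate/witness correspondence is exact, and (ii) relying on Lemma~\ref{lemma:preprocessing-part} for both the equivalence of the kernel and the back-lifting of a witness; the remainder is a routine bounded search over at most $k$ corrections.
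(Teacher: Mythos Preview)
Your proposal is correct and follows essentially the same route as the paper: kernelize via Lemma~\ref{lemma:preprocessing-part}, enumerate all at-most-$k$ subsets of (\(\top\)-labelled) entries of the kernel, flip and test with Proposition~\ref{prop:algo-testing-P-matrix}, then lift a successful witness back using property~(\ref{preprocessing-property:poly-time}). Your write-up is in fact slightly more careful than the paper's in two places: you make the exact two-way correspondence between candidate sets $S$ and matrices at finite $F$-distance $\le k$ explicit, and you justify why the $(nm)^{\cO(1)}$ factor applies to the $\bff{P}$-matrix test on $\bff{B}_S$.
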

\begin{proof}
Let $(\bff{A}, \bff{F},\bff{P}, k)$ be an instance of {\GenPMatrixApprox}.
First, we invoke the polynomial-time algorithm of Lemma~\ref{lemma:preprocessing-part} to either determine that the input instance is a no-instance or
generate an instance $(\bff{A}',\bff{F}, \bff{P}, k)$ satisfying properties~(\ref{preprocessing-property:number-rows-and-columns}) and~(\ref{preprocessing-property:poly-time}).
Recall that the first property says that $\bff{A}'$ has  at most $(\max\{p, k \} + 1)(p + k)$ rows, and  at most $(\max\{q, k\} + 1)(q + k)$ columns.
This means that $\bff{A}'$ has at most $(\max\{p, k \} + 1)(p + k)(\max\{q, k\} + 1)(q + k)$ entries.
We then consider all possible sets of at most $k$ entries.
For every entry of such a set, if the label of an entry is $\top$, we will modify it.
This results in a modified matrix $\bff{B}'$.
We then invoke Proposition~\ref{prop:algo-testing-P-matrix} to check whether $\bff{B}'$ is a $\bff{P}$-matrix or not.
This checking takes $2^{p\log p + q \log q}(nm)^{\cO(1)}$-time.
If $\bff{B}'$ is a $\bff{P}$-matrix, it is  a solution to {\GenPMatrixApprox} for the instance $(\bff{A}', \bff{F}, \bff{B}', k)$ as ${\gd}_{\bff{F}}(\bff{A}', \bff{B}') \leq k$ (since we changed at most $k$ entries in $\bff{A}'$).
Then, we make use of property~(\ref{preprocessing-property:poly-time}) to construct $\bff{B}$ satisfying ${\gd}_{\bff{F}}(\bff{A}, \bff{B}) \leq k$ and return $\bff{B}$ as a solution of {\GenPMatrixApprox} for the instance $(\bff{A}, \bff{F}, \bff{B}, k)$.
Recall that this step takes polynomial time.
Hence, the overall algorithm takes $2^{p\log p + q \log q}(nm)^{\cO(1)} ((\max\{p, k \} + 1)(p + k)(\max\{q, k\} + 1)(q + k))^k$ time. 
\end{proof}

\subsection{GNRM is FPT}
We now explain how the algorithm for {\GenPMatrixApprox} is used to solve GNRM and thus show it is FPT.
The basic strategy is to consider all possible pairs of matrices whose product $\bff{P}$ could provide the basis for a solution to GNRM.
The number of such pairs is bounded above by a function of $r$.
For each such $\bff{P}$, we determine whether the {\GenPMatrixApprox} instance $(\bff{UPA},\bff{F}, {\bf P}, k)$ has a solution, in which case we can then compute a solution to the GNRM instance.

\begin{lemma}
\label{lemma:computing-original-decomposition}
Let $\bff{P}$ be a $p\times q$ matrix such that $\bff{P} = \bff{X}\wedge \bff{Y}$ for a $p \times r$ matrix $\bff{X}$ and an $r \times q$ matrix $\bff{Y}$.
Furthermore, consider an $m \times n$ matrix $\bff{B}$ which is a $\bff{P}$-matrix.
Then, we can in polynomial time obtain an $m \times r$ matrix $\bff{X}^*$, and $r \times n$ matrix $\bff{Y}^*$ such that $\bff{B} = \bff{X}^* \wedge \bff{Y}^*$.
\end{lemma}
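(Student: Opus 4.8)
The plan is to lift the factorization $\bff{P} = \bff{X} \wedge \bff{Y}$ along the partitions witnessing that $\bff{B}$ is a $\bff{P}$-matrix. Since $\bff{B}$ is a $\bff{P}$-matrix, there is a partition $\{I_1, \dots, I_p\}$ of $[m]$ and a partition $\{J_1, \dots, J_q\}$ of $[n]$ such that $b_{st} = p_{ij}$ whenever $s \in I_i$ and $t \in J_j$. First I would recover these partitions: by Proposition~\ref{prop:algo-testing-P-matrix} we can decide whether $\bff{B}$ is a $\bff{P}$-matrix in the stated time, and the natural implementation of that test actually produces the witnessing partitions (alternatively, group rows of $\bff{B}$ into equivalence classes by equality, match each class to a row of $\bff{P}$, and likewise for columns); this is polynomial in $n,m$ since $p \le m$ and $q \le n$. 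Then define $\bff{X}^*$ to be the $m \times r$ matrix whose row $s$ equals row $i$ of $\bff{X}$ whenever $s \in I_i$, and $\bff{Y}^*$ to be the $r \times n$ matrix whose column $t$ equals column $j$ of $\bff{Y}$ whenever $t \in J_j$. Constructing these clearly takes polynomial time.

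The verification is the substantive step: I must check $\bff{B} = \bff{X}^* \wedge \bff{Y}^*$ entrywise. Fix $s \in I_i$ and $t \in J_j$. By construction $x^*_{s\ell} = x_{i\ell}$ for all $\ell \in [r]$ and $y^*_{\ell t} = y_{\ell j}$ for all $\ell \in [r]$. Hence
\[
(\bff{X}^* \wedge \bff{Y}^*)_{st} = \bigvee_{\ell=1}^{r} x^*_{s\ell} \wedge y^*_{\ell t} = \bigvee_{\ell=1}^{r} x_{i\ell} \wedge y_{\ell j} = (\bff{X} \wedge \bff{Y})_{ij} = p_{ij} = b_{st},
\]
where the penultimate equality is the hypothesis $\bff{P} = \bff{X} \wedge \bff{Y}$ and the last is the defining property of the partitions for $\bff{B}$. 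Since every $(s,t) \in [m] \times [n]$ lies in some $I_i \times J_j$ (the $I_i$ and $J_j$ are partitions), this establishes $\bff{B} = \bff{X}^* \wedge \bff{Y}^*$.

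The only real obstacle is the bookkeeping of how one obtains the row/column partitions of $\bff{B}$ in polynomial time. If one does not wish to lean on the internals of the algorithm in Proposition~\ref{prop:algo-testing-P-matrix}, it suffices to observe directly: group the rows of $\bff{B}$ by the equivalence relation "equal as vectors", and similarly the columns; since $\bff{B}$ is a $\bff{P}$-matrix, Observation~\ref{obs:distinct-rows-columns} gives at most $p$ row-classes and at most $q$ column-classes. Each row-class must correspond to a repeated row pattern of $\bff{P}$, and we may pick, for each class, any representative row index $i$ of $\bff{P}$ realizing that pattern (consistently across the rectangular blocks); likewise for columns. All of this is straightforward linear-scan computation, so the total running time is polynomial in $nm$, as claimed.
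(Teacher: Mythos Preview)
Your proof is correct and follows essentially the same approach as the paper's: both define row $s$ of $\bff{X}^*$ to be row $i$ of $\bff{X}$ whenever $s \in I_i$, and column $t$ of $\bff{Y}^*$ to be column $j$ of $\bff{Y}$ whenever $t \in J_j$. In fact, your write-up is more complete than the paper's, which does not spell out the entrywise verification of $\bff{B} = \bff{X}^* \wedge \bff{Y}^*$ nor discuss how the partitions are obtained in polynomial time.
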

\begin{proof}
As $\bff{B}$ is a $\bff{P}$-matrix, there are partitions $\{I_1,\ldots,I_p\}$ of $[m]$ and $\{J_1,\ldots,J_q\}$ of $[n]$ such that for every $i \in [p], j \in [q]$, $s \in I_i, t \in I_j$, $\bff{b}_{st} = \bff{p}_{ij}$.

We initialize $\bff{X}^* = \bff{X}$ and $\bff{Y}^* = \bff{Y}$.
Consider an entry $\bff{p}_{ij}$. Let $\bff{x}_i$ be the $i$'th row of $\bff{X}$ and $\bff{y}^j$ the $j$'th column of $\bff{Y}$; then $\bff{x}_i\wedge \bff{y}^j=\bff{p}_{ij}$.
Let $c \in [p]$ and $d \in [q]$ such that $i \in I_c$ and $j \in J_d$.
Then, for any $s \in I_c$ and $t \in J_d$, set $b_{st} = \bff{p}_{ij}$.
Then, for any $s \in I_c$ and for any $t \in I_d$, we insert $\bff{x}_i$ as the $s$'th row of $\bff{X}^*$ and $\bff{y}^j$ as the $t$'th column of $\bff{Y}^*$.
\end{proof}

The {\em Boolean rank} of a matrix $\bf A$, denoted  ${\brank}(\bff{A})$, is the minimum natural number $r$ such that ${\bf A}={\bf B}\wedge {\bf C}$, where ${\bf B}$ and ${\bf C}$ are matrices such that 
the number of columns in $\bf B$ and the number of rows in $\bf C$ is $r.$
Thus, a matrix $\bff{A}$ has Boolean rank 1 if and only if $\bff{ A}=\bff{ x} \wedge \bff{ y}^T$ for some column-vectors $\bff{ x}$ and $\bff{ y}$. In fact, ${\brank}(\bff{A})=r$ if and only if $r$ is 
the minimum natural number such that ${\bf A}={\bf X}^{(1)}\vee \dots \vee {\bf X}^{(r)},$ where matrices $\bff{ X}^{(1)}, \dots , \bff{ X}^{(r)}$ are of Boolean rank 1~\cite{Kim82}.
%
%

\begin{theorem}
\label{thm:upper-role-mining-final-result-1}
\sloppy
{\GNRM} admits an $\cO^*(2^{\cO(r2^r+rk)})$-time algorithm.
\end{theorem}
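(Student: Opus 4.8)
The plan is to reduce GNRM to \GenPMatrixApprox{} and then invoke Theorem~\ref{lemma:FPT-P-matrix-approx}, following the outline given just before Lemma~\ref{lemma:computing-original-decomposition}. Suppose a yes-solution to the GNRM instance $(\bff{UPA},\bff{F},k,r)$ is given by an $m\times r$ matrix $\bff{UA}$ and an $r\times n$ matrix $\bff{PA}$, and put $\bff{B}=\bff{UA}\wedge\bff{PA}$. Each row of $\bff{B}$ is the $\vee$ of the rows of $\bff{PA}$ indexed by the $1$-entries of the corresponding row of $\bff{UA}$, so $\bff{B}$ has at most $2^r$ pairwise distinct rows and, symmetrically, at most $2^r$ pairwise distinct columns. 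Let $\bff{P}$ be the $p\times q$ matrix obtained by keeping one representative of each row-type and of each column-type of $\bff{B}$; then $p,q\le 2^r$, the matrix $\bff{B}$ is a $\bff{P}$-matrix, and since deleting rows and columns of a Boolean matrix never increases its Boolean rank we get $\brank(\bff{P})\le\brank(\bff{B})\le r$. Hence $\bff{P}=\bff{X}\wedge\bff{Y}$ for some $p\times r$ matrix $\bff{X}$ and $r\times q$ matrix $\bff{Y}$ (if $\brank(\bff{P})<r$, pad the factorisation with all-zero columns of $\bff{X}$ together with arbitrary matching rows of $\bff{Y}$).

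This motivates the algorithm. First I would enumerate all pairs $(\bff{X},\bff{Y})$ consisting of a $p\times r$ Boolean matrix and an $r\times q$ Boolean matrix with $1\le p,q\le 2^r$; there are $2^{\cO(r2^r)}$ such pairs. For each pair, form the pattern $\bff{P}=\bff{X}\wedge\bff{Y}$ and run the algorithm of Theorem~\ref{lemma:FPT-P-matrix-approx} on the instance $(\bff{UPA},\bff{F},\bff{P},k)$ of \GenPMatrixApprox{}. If it returns an $m\times n$ $\bff{P}$-matrix $\bff{B}^{\star}$ with ${\gd}_{\bff{F}}(\bff{UPA},\bff{B}^{\star})\le k$, apply Lemma~\ref{lemma:computing-original-decomposition} to $\bff{P}$, $\bff{X}$, $\bff{Y}$ and $\bff{B}^{\star}$ to obtain an $m\times r$ matrix $\bff{X}^{\star}$ and an $r\times n$ matrix $\bff{Y}^{\star}$ with $\bff{B}^{\star}=\bff{X}^{\star}\wedge\bff{Y}^{\star}$, and output $\bff{UA}:=\bff{X}^{\star}$ and $\bff{PA}:=\bff{Y}^{\star}$. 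If no pair yields a solution, output ``no''.

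Correctness splits into two directions. Whatever the algorithm outputs satisfies ${\gd}_{\bff{F}}(\bff{UPA},\bff{UA}\wedge\bff{PA})={\gd}_{\bff{F}}(\bff{UPA},\bff{B}^{\star})\le k$, so it never reports a yes-answer on a no-instance. Conversely, on a yes-instance the pair $(\bff{X},\bff{Y})$ built in the first paragraph is among those enumerated, and for the corresponding $\bff{P}$ the instance $(\bff{UPA},\bff{F},\bff{P},k)$ is a yes-instance of \GenPMatrixApprox{} (witnessed by $\bff{B}$); hence Theorem~\ref{lemma:FPT-P-matrix-approx} returns some valid $\bff{B}^{\star}$ and Lemma~\ref{lemma:computing-original-decomposition} lifts it to a GNRM solution. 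The only subtlety worth flagging is that $\bff{B}^{\star}$ need not equal the original $\bff{B}$ — which is harmless, but it means correctness has to be deduced from the guarantee of Theorem~\ref{lemma:FPT-P-matrix-approx} rather than by literally recovering $\bff{B}$.

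For the running time I would combine the $2^{\cO(r2^r)}$ candidates with the per-candidate cost from Theorem~\ref{lemma:FPT-P-matrix-approx} evaluated at $p,q\le 2^r$: the factor $2^{p\log p+q\log q}$ becomes $2^{\cO(r2^r)}$, and a routine calculation (using $p,q\le 2^r$) bounds the factor $((\max\{p,k\}+1)(p+k)(\max\{q,k\}+1)(q+k))^k$ by $2^{\cO(rk)}$, so the product is $\cO^{*}(2^{\cO(r2^r+rk)})$. I do not expect any single step to be a serious obstacle, since Theorem~\ref{lemma:FPT-P-matrix-approx} and Lemma~\ref{lemma:computing-original-decomposition} do the heavy lifting; the points that need care are establishing the $2^r$ bound on the number of distinct rows and columns of $\bff{B}$ (which simultaneously bounds the size of the pattern $\bff{P}$ and the number of enumerated patterns) and checking that the approximate solution $\bff{B}^{\star}$ returned by the \GenPMatrixApprox{} routine does compose with Lemma~\ref{lemma:computing-original-decomposition} to give matrices of the required dimensions $m\times r$ and $r\times n$.
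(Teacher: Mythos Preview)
Your proposal is correct and follows essentially the same route as the paper: enumerate candidate patterns $\bff{P}=\bff{X}\wedge\bff{Y}$ of bounded size, solve \GenPMatrixApprox{} for each via Theorem~\ref{lemma:FPT-P-matrix-approx}, and lift a successful $\bff{B}^{\star}$ to a GNRM solution via Lemma~\ref{lemma:computing-original-decomposition}. The only cosmetic differences are that the paper fixes $p=\min\{2^r,m\}$ and $q=\min\{2^r,n\}$ rather than ranging over $1\le p,q\le 2^r$, and it derives the $2^r$ bound on distinct rows by induction on the rank-$1$ decomposition rather than by your (cleaner) observation that each row of $\bff{B}$ is a disjunction of a subset of the $r$ rows of $\bff{PA}$.
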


\begin{proof}
Let $\bff{B}$ be an $m \times n$-matrix and let $r$ be the Boolean rank of $\bff{B}$.
Thus, there are $r$ matrices $\bff{B}^{(1)},\dots , \bff{B}^{(r)}$, each of Boolean rank 1, such that $\bff{B}=\bff{B}^{(1)}\vee \dots \vee \bff{B}^{(r)},$ where for each $i\in [r]$, $\bff{B}^{(i)}=\bff{x}^{i}\wedge (\bff{y}^{i})^T$ for some column-vectors $\bff{x}^{i}$ and $\bff{y}^{i}$.
It can be shown by induction on $r$ that $\bff{B}$ has at most $2^r$ distinct rows and at most $2^r$ distinct columns.\footnote{For $r=1$, since $\bff{B}^{(1)}=\bff{x}^{1}\wedge (\bff{y}^{1})^T$, $\bff{B}^{(1)}$ has at most two distinct rows, $\bff{y}^{1}$ and the all-zero row, and has at most two distinct columns, $\bff{x}^{1}$ and the all-zero column. Now let $r\ge 2$. By induction hypothesis, $\bff{B}=\bff{B}^{(\le r-1)}\vee \bff{B}^{(r)},$ where $\bff{B}^{(\le r-1)}$ has at most $2^{r-1}$ rows and columns and $\bff{B}^{(r)}$ at most two rows and columns. Since every row (column, respectively) of $\bff{B}$ is the disjunction of the corresponding rows (columns, respectively) of $\bff{B}^{(\le r-1)}$ and $\bff{B}^{(r)}$, the number of distinct rows (columns, respectively) in $\bff{B}$  is at most $2^{r-1}\cdot 2=2^r.$ 
}
Therefore, $\bff{B}$ is of Boolean rank at most $r$ if and only if
there is a $p\times q$ matrix $\bff{P}$ of Boolean rank at most $r$ for  $p = \min\{2^r,m\}$ and $q = \min\{2^r,n\}$ such as $\bff{B}$ is a $\bf P$-matrix.

Moreover, an $m \times n$-matrix $\bf B$ is of rank $r$ if $r$ is the minimum natural number such that $\bff{B}=\bff{C}\wedge \bff{D},$ where $\bff{C}$ is an $m \times r$-matrix and $\bf C$ is an $r \times n$-matrix. 
Hence, {\GNRM} can be reformulated as follows: Decide whether  there is a $p\times q$-pattern matrix $\bff{P}$ of
Boolean rank $r$ and an $m \times n$ $\bff{P}$-matrix $\bff{B}$ such that ${\gd}_{\bff{F}}(\bff{UPA},  \bff{B}) \leq k$ and if $\bff{B}$ does exist then find matrices $\bff{UA}$ and $\bff{PA}$ of sizes $m\times r$ and $r\times n$, respectively, such that $\bff{B}=\bff{UA}\wedge \bff{PA}.$

Thus, to solve {\GNRM} with input $(\bff{UPA},\bff{F},k)$, we can use the following algorithm: 
\begin{description}
\item[1.] Generate all pairs $(\bff{X},\bff{Y})$ of matrices of sizes $p\times r$ and $r\times q$, respectively, and for each such pair compute $\bf P=\bff{X} \wedge \bff{Y}$;
\item[2.] For each $\bff{ P},$ solve {\GenPMatrixApprox} for the instance $(\bff{UPA},\bff{F}, \bff{P}, k).$ If $(\bff{UPA}, \bff{F}, \bff{ P}, k)$ is a yes-instance, then using the algorithm of Lemma \ref{lemma:computing-original-decomposition} return matrices  $\bff{UA}$ and $\bff{PA}$ of sizes $m\times r$ and $r\times n$
such that ${\bf B}=\bff{UA}\wedge \bff{PA},$ where $\bf B$ is the solution of the instance $(\bff{UPA},\bff{F}, \bff{ P}, k)$;
\item[3.] If all instances above are no-instances of {\GenPMatrixApprox}, return ``no-instance.''
\end{description}
 It remains to evaluate the running time of the above algorithm.  
Since $p \leq 2^r$ and $q \leq 2^r$, there are at most $2^{\cO(r2^r)}$ pairs $(\bff{X},\bff{Y}),$ and we can compute all matrices $\bff{P}$ in time $2^{\cO(r2^r)}$.
Thus, the running time of the algorithm is dominated by that of Step 2. The running time of Step 2 is upper bounded by the number of matrices $\bff{P}$ (it is equal to $2^{\cO(r2^r)}$) times the maximum running time of  solving {\GenPMatrixApprox} on an instance $(\bff{UPA},\bff{F}, \bff{P}, k)$ and computing $\bff{UA}$ and $\bff{PA}$, if the instance is a yes-instance. By Lemma \ref{lemma:computing-original-decomposition}, Theorem \ref{lemma:FPT-P-matrix-approx} and the bounds $p\le 2^r, q\le 2^r$, the maximum running time is upper bounded by $\cO^*(2^{\cO(r2^r+rk)})$. It remains to observe that $2^{\cO(r2^r)}\cdot \cO^*(2^{\cO(r2^r+rk)})=\cO^*(2^{\cO(r2^r+rk)}).$
\end{proof}

\subsection{Solving BO-GNRM}\label{sec:opt}

To design an algorithm for computing the Pareto front of BO-GNRM, let us consider a related problem, the {\sc one-objective GNRM} problem (OO-GNRM): compute the minimum value $k_{\min}(r)$ of $k$ for every $r\in [\bar{r}]$ such that $k_{\min}(r)\le \bar{k}$.
In other words, given $r$, $k_{\min}(r)$ is the smallest number of discrepancies for a solution containing $r$ roles.
OO-GNRM can be easily solved by running the $\cO^*(2^{\cO(\bar{r}2^{\bar{r}}+\bar{r}\bar{k})})$-time algorithm of Theorem~\ref{thm:upper-role-mining-final-result-1} for $k\in \{0,1,\dots ,\bar{k}\}.$\footnote{We could introduce a different one-objective optimization version of GNRM, where we minimize $r$. Our choice of minimizing $k$ is explained in Section~\ref{sec:choice-of-oo}.}
(Note that if $k_{\min}(r) > \bar{k}$ then there is no solution to BO-GNRM for that value of $r$.)
This allows us to compute the Pareto front $\hat{P}$ of BO-GNRM as follows: 
\begin{equation}
\label{eq:bo-gnrm}
\hat{P} = \{(r, k_{\min}(r)):\ r \in [\bar{r}],\ 0\le k_{\min}(r) \le \bar{k},\ k_{\min}(r) < k_{\min}(r-1) \text{ if } r \ge 2\}
\end{equation}

Figure~\ref{fig:PF} illustrates the notions introduced above for $\bar{k} = 6$ and $\bar{r}=11$.
In particular, there is no solution such that $r=1$ and $k$ is less than the maximum value allowed ($6$ in this case), and no solution for $r=2$. In contrast, we can find solutions for $r \in \{3,4,5\}$ with $k_{\min}(r) = 5$. Hence $(3,5)$ belongs to the Pareto front. Similarly $(6,3)$, $(8,2)$ and $(9,0)$ belong to the Pareto front. An organization can decide which point on the Pareto front is preferable: for example, having no discrepancies between the original policy and the mined policy using nine roles versus only six roles but accepting three discrepancies.

By the arguments above, we have the following:

\begin{theorem}
\label{thm:BO}
\sloppy
There is an $\cO^*(2^{\cO(\bar{r}2^{\bar{r}}+\bar{r}\bar{k})})$-time algorithm for constructing the Pareto front of 
{BO-GNRM}. Thus, BO-GNRM is FPT with parameter  $\bar{r}+\bar{k}.$
\end{theorem}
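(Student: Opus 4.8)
The plan is to build the Pareto front by repeated invocations of the GNRM algorithm of Theorem~\ref{thm:upper-role-mining-final-result-1}, following the reduction already sketched before the theorem. First I would record the monotonicity of feasibility: if $(r',k')$ is a feasible solution of BO-GNRM and $r'\le r''\le\bar r$, $k'\le k''\le\bar k$, then $(r'',k'')$ is feasible as well. Indeed, increasing $k$ only relaxes the constraint ${\gd}_{\bff{F}}(\bff{UPA},\bff{UA}\wedge\bff{PA})\le k$, and increasing $r$ only appends columns to $\bff{UA}$ and rows to $\bff{PA}$, which may be taken to be all-zero so that the product $\bff{UA}\wedge\bff{PA}$ is unchanged. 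Hence, for each $r\in[\bar r]$, the value $k_{\min}(r)$ is well defined (set it to $\infty$ if no feasible $k\le\bar k$ exists) and $k_{\min}$ is non-increasing in $r$.

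Next I would solve OO-GNRM: for each $r\in[\bar r]$, run the algorithm of Theorem~\ref{thm:upper-role-mining-final-result-1} on the GNRM instance with matrix $\bff{UPA}$, label matrix $\bff{F}$, role bound $r$, and $k=0,1,\dots,\bar k$, stopping at the first $k$ that yields a yes-instance; that $k$ is $k_{\min}(r)$ (and I declare $r$ infeasible if all $\bar k+1$ runs fail). This uses at most $\bar r(\bar k+1)$ calls, and since each yes-call also returns witnessing matrices $\bff{UA},\bff{PA}$, I simultaneously obtain an explicit optimal decomposition for every pair I will place on the Pareto front. Then I apply formula~\eqref{eq:bo-gnrm} to assemble $\hat P$.

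The one step that needs an actual (short) argument is that \eqref{eq:bo-gnrm} outputs exactly the Pareto-optimal solutions. By the monotonicity above, a feasible $(r',k')$ fails the ``no $(r'',k'')$ with $r''\le r'$ and $k''<k'$'' part of Pareto optimality precisely when $k'>k_{\min}(r')$, so Pareto optimality forces $k'=k_{\min}(r')$; and it fails the ``no $(r'',k'')$ with $r''<r'$ and $k''\le k'$'' part precisely when some $r''<r'$ has $k_{\min}(r'')\le k'$, which by monotonicity is equivalent to $k_{\min}(r'-1)\le k_{\min}(r')$ (vacuous for $r'=1$). Thus a feasible in-bounds pair is Pareto optimal iff it equals $(r,k_{\min}(r))$ with $r\in[\bar r]$, $0\le k_{\min}(r)\le\bar k$, and $k_{\min}(r)<k_{\min}(r-1)$ when $r\ge2$ — exactly \eqref{eq:bo-gnrm}. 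I do not expect a genuine obstacle here; this verification via monotonicity of $k_{\min}$ is the only content beyond invoking Theorem~\ref{thm:upper-role-mining-final-result-1}.

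Finally, for the running time: a single call of the Theorem~\ref{thm:upper-role-mining-final-result-1} algorithm with role bound $r\le\bar r$ and noise bound $k\le\bar k$ costs $\cO^*(2^{\cO(r2^r+rk)})=\cO^*(2^{\cO(\bar r2^{\bar r}+\bar r\bar k)})$, using that $x\mapsto x2^x$ is increasing. We make at most $\bar r(\bar k+1)$ such calls and then spend polynomial time evaluating \eqref{eq:bo-gnrm}; the factor $\bar r(\bar k+1)$ is bounded by a function of $\bar r+\bar k$ and is absorbed into the exponential term and the $\cO^*(\cdot)$ notation. Hence the whole procedure constructs the Pareto front in time $\cO^*(2^{\cO(\bar r2^{\bar r}+\bar r\bar k)})$, so BO-GNRM is FPT with parameter $\bar r+\bar k$, as claimed.
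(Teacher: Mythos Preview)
Your proposal is correct and follows essentially the same approach as the paper: iterate the GNRM algorithm of Theorem~\ref{thm:upper-role-mining-final-result-1} over all $r\in[\bar r]$ and $k\in\{0,\dots,\bar k\}$ to compute each $k_{\min}(r)$, then read off the Pareto front via~\eqref{eq:bo-gnrm}. The paper's own proof is just the sentence ``By the arguments above,'' referring to that discussion; you have supplied the details the paper leaves implicit, namely the monotonicity of feasibility in $(r,k)$ and the verification that~\eqref{eq:bo-gnrm} characterises precisely the Pareto-optimal pairs.
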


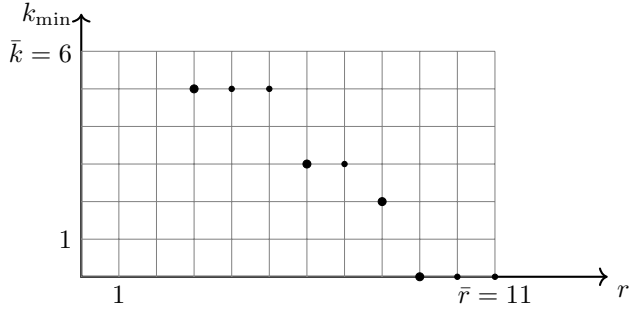
\begin{figure}
\begin{tikzpicture}[scale=0.5]
\draw [thick, <->] (0,7) -- (0,0) -- (14,0);
\node [below right] at (14,0) {$r$};
\node [below] at (11,0) {$\bar{r}=11$};
\node [below] at (1,0) {1};
\node [left] at (0,7) {$k_{\min}$};
\node [left] at (0,6) {$\bar{k}=6$};
\node [left] at (0,1) {1};
\draw [help lines] (0,0) grid (11,6);
\draw[fill] (3,5) circle [radius=3pt];
\draw[fill] (4,5) circle [radius=2pt];
\draw[fill] (5,5) circle [radius=2pt];
\draw[fill] (6,3) circle [radius=3pt];
\draw[fill] (7,3) circle [radius=2pt];
\draw[fill] (8,2) circle [radius=3pt];
\draw[fill] (9,0) circle [radius=3pt];
\draw[fill] (10,0) circle [radius=2pt];
\draw[fill] (11,0) circle [radius=2pt];
\end{tikzpicture}
\caption{Solutions of OO-GNRM and BO-GNRM,
where all the circles are solutions of OO-GNRM and all the large circles form $\hat{P}$. Note that points $(1,k_{\min}(1))$ and $(2,k_{\min}(2))$ are not depicted since $k_{\min}(2)>\bar{k}$.}
\label{fig:PF}
\end{figure}

\section{GNRM solver}
\label{sec:GNRM-solver}

To solve the BO-GNRM problem, we use a general-purpose solver.
We could have implemented a bespoke FPT algorithm to solve BO-GNRM, based on our results in the preceding sections.
However, we believe using a general-purpose solver is likely to be more useful in practice.
First, the formulation of the problem as an integer program is quite intuitive, and therefore easier to understand and maintain than a bespoke algorithm that relies on some relatively complex theory.
Second, general-purpose solvers may perform well on instances of a hard problem that is known to be FPT~\cite{Karapetyan2019}.
For example, an intelligent general-purpose solver might be able to automatically identify and apply reductions during the pre-solve process leading to an FPT-like behaviour.

In this section, we describe our approach to solving BO-GNRM, the integer programming formulation of OO-GNRM, and in Section~\ref{sec:is-fpt-like-runtime} we confirm that the empirical behaviour of this solver is consistent with that expected of an FPT algorithm.
We then apply our solver in Section~\ref{sec:real-world} to real-world instances to solve BO-GNRM.

\subsection{The choice of OO-GNRM}
\label{sec:choice-of-oo}

Our approach to solving the BO-GNRM is to decompose it into multiple OO-GNRM instances.
We have two options: (i) find $k_{\min}(r)$ for each $r$, or (ii) find $r_{\min}(k)$ for each $k$.
(Here, $r_{\min}(k)$ is the smallest value of $r$ that makes the instance of GNRM satisfiable for a given $k$.)

Note that the solution size (the number of decision variables in the solution representation) depends on $r$ but does not depend on $k$.
This makes it easier technically and computationally to fix $r$ and minimise $k$.
Also note that the value of $k_{\min}(r) \in \cO(mn)$ whereas $r_{\min}(k) \in \cO(m)$.
Indeed, we observed in our experiments that $k_{\min}(r)$ can reach large values for small $r$ whereas $r_{\min}(k)$ is relatively small even for $k = 0$.
Considering the above observations, we chose approach (i), i.e.\ our solver takes $r$ as a parameter and searches for $k_{\min}(r)$.
Then, to obtain the Pareto front for the BO-GNRM, we use formula~(\ref{eq:bo-gnrm}).

\subsection{Formulation of GNRM}

We present a mixed integer programming formulation of the OO-GNRM in order to solve the problem using a general-purpose solver.
Our formulation consists of two matrices of Boolean variables $\mathit{ua}_{i,\ell}$, $i \in [m]$, $\ell \in [r]$, and $\mathit{pa}_{\ell, j}$, $\ell \in [r]$, $j \in [n]$.
We also use a matrix of auxiliary Boolean variables $d_{i,j}$, $i \in [m]$, $j \in [n]$, representing the discrepancies between $\bff{UA} \wedge \bff{PA}$ and $\bff{UPA}$.
We use the values $0$ and $1$ to represent the values $\bot$ and $\top$, respectively, in the $\bff{F}$ matrix.
Figure~\ref{fig:formulation} describes the formulation in detail.

A na\"ive formulation of the problem would include a matrix of Boolean variables~--~to represent the product $\bff{UA} \wedge \bff{PA}$~--~and link these variables to the $\mathit{ua}_{i,\ell}$ and $\mathit{pa}_{\ell, j}$ variables.
Then it would be easy to formulate the constraints and link the decision variables to variables $d_{i,j}$.
However, knowing the values of $\textit{upa}_{i,j}$ and $f_{i,j}$ for a specific pair $(i,j)$, we can formulate the constraints more compactly.
Thus, the formulation described in Figure~\ref{fig:formulation} defines the constraints separately for each combination of values of $\textit{upa}_{i,j}$ and $f_{i,j}$.%
\footnote{The compact formulation performed significantly better than the na\"ive formulation in our experiments, so we only report results for the compact configuration.
}

\newcommand\AddLabel[0]{\refstepcounter{equation}(\theequation)}
\newcommand\commentwidth{20em}

\begin{figure*}[h]
\everymath{\displaystyle}
\renewcommand{\arraystretch}{1.3}
\frame{
\begin{tabular}{llr @{\qquad} p{\commentwidth}}
\multicolumn{2}{l}{Minimise $\sum_{i \in [m]} \sum_{j \in [n]} d_{i,j}$,}
    & \AddLabel
    & The objective is to minimise the discrepancies. \\
\multicolumn{3}{l}{For all $i \in [m]$ and $j \in [n]$ such that $f_{i,j} = 0$ and $\mathit{upa}_{i,j} = 0$:}
	& \multirow{2}{\commentwidth}{Since the discrepancy is not allowed, either $\mathit{ua}_{i,\ell}$ or $\mathit{pa}_{\ell,j}$ has to be zero for every $\ell$.} \\
	$\qquad \mathit{ua}_{i,\ell} + \mathit{pa}_{\ell,j} \le 1$
        & $\forall \ell \in [r]$, 
        & \AddLabel \\[2ex]
\multicolumn{3}{l}{For all $i \in [m]$ and $j \in [n]$ such that $f_{i,j} = 0$ and $\mathit{upa}_{i,j} = 1$:}
	& \multirow{4}{\commentwidth}{We need $\mathit{ua}_{i,\ell} = \mathit{pa}_{\ell,j} = 1$ for some $\ell$.  We enforce that at least one of $x_{i,j,1}..x_{i,j,r}$ is 1 and also if $x_{i,j,\ell} = 1$ then $\mathit{ua}_{i,\ell} = \mathit{pa}_{\ell,j} = 1$.} \\
    $\qquad x_{i,j,\ell} \le \mathit{ua}_{i,\ell}$
        & $\forall \ell \in [r]$,
        & \AddLabel \\
    $\qquad x_{i,j,\ell} \le \mathit{pa}_{\ell,j}$
        & $\forall \ell \in [r]$,
        & \AddLabel \\
    $\qquad \sum_{\ell \in [r]} x_{i,j,\ell} \ge 1$
        & $\forall \ell \in [r]$,
        & \AddLabel \\[4ex]
\multicolumn{3}{l}{For all $i \in [m]$ and $j \in [n]$ such that $f_{i,j} = 1$ and $\mathit{upa}_{i,j} = 0$:}
	& \multirow{2}{\commentwidth}{If both $\mathit{ua}_{i,\ell}$ and $\mathit{pa}_{\ell,j}$ are ones for some $\ell$ then this is a discrepancy.} \\
    $\qquad \mathit{ua}_{i,\ell} + \mathit{pa}_{\ell,j} \le 1 + d_{i,j}$
        & $\forall \ell \in [r]$,
        & \AddLabel \\[2ex]
\multicolumn{3}{l}{For all $i \in [m]$ and $j \in [n]$ such that $f_{i,j} = 1$ and $\mathit{upa}_{i,j} = 1$:}
	& \multirow{4}{\commentwidth}{If either $\mathit{ua}_{i,\ell} = 0$ or $\mathit{pa}_{\ell,j} = 0$ for every $\ell$, this is a discrepancy.  Auxiliary variable $x_{i,j,\ell}$ is forced to 0 if either $\mathit{ua}_{i,\ell} = 0$ or $\mathit{pa}_{\ell,j} = 0$.  If $x_{i,j,\ell} = 0$ for every $\ell$ then we force $d_{i,j} = 1$.} \\
	$\qquad \mathit{ua}_{i,\ell} \ge x_{i,j,\ell}$
        & $\forall \ell \in [r]$,
        & \AddLabel \\
    $\qquad \mathit{pa}_{\ell,j} \ge x_{i,j,\ell}$
        & $\forall \ell \in [r]$,
        & \AddLabel \\
    $\qquad \sum_{\ell \in [r]} x_{i,j,\ell} \ge 1 - d_{i,j}$
        & $\forall \ell \in [r]$,
        & \AddLabel \\[4ex]
%
$x_{i,j,\ell} \in \{ 0, 1 \}$
	& $\forall i \in [m],\ \forall j \in [n],\ \forall \ell \in [r]$,
    & \AddLabel 
    & Auxiliary variables, see the cases where $\mathit{upa}_{i,j} = 1$. \\
$d_{i,j} \in \{ 0, 1 \}$
	& $\forall i \in [m],\ \forall j \in [n]$,
    & \AddLabel 
    & Indicates whether there is a discrepancy in the corresponding element. \\
$\mathit{ua}_{i,j} \in \{ 0, 1 \}$
	& $\forall i \in [m],\ \forall j \in [r]$,
    & \AddLabel 
    & Defines the $\bff{UA}$ matrix. \\
$\mathit{pa}_{i,j} \in \{ 0, 1 \}$
	& $\forall i \in [r],\ \forall j \in [n]$.
    & \AddLabel
    & Defines the $\bff{PA}$ matrix.
\end{tabular}
}
\caption{CSP formulation of GNRM.}
\label{fig:formulation}
\end{figure*}

\subsection{The choice of the solver}

For our conference paper, we used the CP-SAT solver from Google OR-Tools to solve the GNRM problem (the decision version of BO-GNRM)\@.
For this paper, we considered three options:
\begin{enumerate}
    \item
    Solving GNRM using CP-SAT to identify the minimum value of $k$ that makes the instance satisfiable (\url{https://developers.google.com/optimization/cp/cp_solver});

    \item 
    Solving OO-GNRM using the linear optimisation solver from Google OR-Tools (\url{https://developers.google.com/optimization/lp}); and

    \item 
    Solving OO-GNRM using the Gurobi mixed integer programming solver (\url{https://www.gurobi.com/solutions/gurobi-optimizer}).
\end{enumerate}
Following experimentation, we concluded that Gurobi is considerably faster than the other approaches.
For example, for an instance of size $m \times n = 25 \times 25$ and $r = 5$ (with $k_\text{min}(5) = 22$), the Gurobi solver was 86 times faster than the CP-SAT-solver-based approach and 50 times faster than the linear optimisation solver from OR-Tools.
Thus, all the reported experiments in this paper are conducted with Gurobi.
We also attempted a few modifications of the formulation in Figure~\ref{fig:formulation}.
Specifically, we tried several approaches to symmetry breaking as well as adding simple custom cuts.
None of these changes improved the performance though; we assume that the internal mechanisms of Gurobi are intelligent enough to identify all the simple properties of our formulation and exploit them effectively.
We also found the default parameter values of Gurobi to be effective.
Apart from the time limit, the only Gurobi parameter that we adjusted in some experiments was \emph{MIPFocus}; we set it to 1 to intensify the search for feasible solutions when the solver was used as a heuristic.

\section{Does the Gurobi-based solver have FPT-like runtime?}
\label{sec:is-fpt-like-runtime}

\newcommand{\fixedsigmaplot}[3]{
	\addplot [#2] table [x=k, y=time, restrict expr to domain={\thisrow{sigma}}{#1:#1}, unbounded coords=discard] {3d_fit.txt};
	\addplot [#2, dotted, domain=0:30] {#3};
}

In this section, we test the hypothesis that the Gurobi-based solver is capable of exploiting the FPT structure of GNRM\@.
Since our solver addresses the OO-GNRM, we focus on testing if its running time is FPT-like with respect to $k_{\min} = k_{\min}(r)$ when $r$ is fixed.

We say that a solver has FPT-like running time if its empirical running time scales polynomially with the size of the problem instance.
As we talk about empirical running time, we focus on `typical' instances rather than the worst case.
However, the concept of a `typical' instance is vague and brings difficulties to the experimental set-up.
In the rest of this section, we introduce a new methodology to identify the scaling of the empirical running time of an algorithm and use it to show that the Gurobi-based solver has FPT-like running time.

Note that using real-world instances in such a study is generally impractical for the following reasons:
\begin{itemize}
	\item
	Such a study requires a large number of instances whereas real-world benchmark sets are usually very limited;
	
	\item
	To draw conclusions about the scaling behaviour of a solver, we need instances with a wide range of parameters which might not be present in real-world benchmark sets; and
	
	\item
	We need to ensure that the instances have consistent difficulty.
\end{itemize}

The first two points can easily be addressed by using synthetic instances produced by a pseudo-random instance generator.
The third point, however, remains a challenge even if we use a pseudo-random instance generator.
Indeed, changes in some parameters of the generator such as the instance size may affect the hardness of the instance.
In decision problems, this behaviour can often be explained by shifting between the under- and over-subscribed instance regions.
Then, to keep the hardness of the instances consistent, it is necessary to adjust some other generator parameters.
This approach was used in several other studies, e.g.~\cite{Karapetyan2022,Crampton2021,Karapetyan2019} focusing on the phase-transition region (the region between the under- and over-subscribed instances).
However, we could not use this approach in this study as phase transition is undefined for optimisation problems.

Instead, our approach in this study relies on the median running times across wide ranges of diverse instances of similar size and value of the parameter.
Thus, we measure how the median running time scales with the instance size and parameter value.

\subsection{Pseudo-random instance generator}

We adopted a pseudo-random instance generator used in earlier experimental work by Vaidya {et al.}~\cite{Vaidya2010}.
The generator takes an integer $r_0 > 0$ as a parameter and produces an instance of $\bff{UPA} \in \mathbb{R}^{m \times n}$ by creating random $\bff{UA} \in \mathbb{R}^{m \times r_0}$ and $\bff{PA} \in \mathbb{R}^{r_0 \times n}$ matrices and multiplying them together: $\bff{UPA} = \bff{UA} \wedge \bff{PA}$.
This means that we know an upper bound on the number of roles we need to mine.
We used the following settings in our generator: 
\begin{itemize}
    \item 
    the number of roles per user was randomly chosen for each user from the interval $[0, r_u]$, where $r_u$ is an instance generator parameter; and
	
    \item 
    the number of permissions per role was randomly chosen for each role from $[0, \lfloor 0.25n \rfloor]$.
	
\end{itemize}

Thus, the parameters of our generator are as follows:
\begin{itemize}
    \item
    $m$ is the number of users in the instance;

    \item 
    $n$ is the number of permissions in the instance;
    
    \item
    $r_0$ is the overall number of roles in the $\bff{UA}$ matrix used to produce the $\bff{UPA}$ matrix.

    \item
    $r_u$ is the maximum number of roles per user; $r_u \le r_0$; and

    \item 
    $r$ is the number of roles that need to be mined; this value does not affect the $\bff{UPA}$ matrix but is used by the solver.
\end{itemize}

\subsection{Data collection process}

All our computations were performed on a machine based on two Xeon~E5-2630~v2 CPUs (2.60~GHz), with 32~GB of RAM\@.
We used Gurobi~10.0.
It was restricted to one thread for the experiments in Section~\ref{sec:is-fpt-like-runtime}, with up to 12 experiments running in parallel, whereas the number of threads was unrestricted for the experiments in Section~\ref{sec:real-world} but only one experiment was conducted at a time.

In order to prove that the Gurobi-based solver has FPT-like running time, we would need to generate instances of various sizes but with fixed parameters $r$ and $k_{\min}$, and plot the running time against the instance size.
It was easy to fix the value of $r$; we set it to 5 for all the experiments in this section.
However, $k_{\min}$ is actually not a parameter of the instance generator; it is the objective value.
Thus, we developed the following methodology.

First, we generated a large set of instances for a wide range of instance generator parameters.
By solving each instance, we obtained the value of $k_{\min}$ for that instance, which then allowed us to select instances based on their values of $k_{\min}$.
For example, we could see how the running time scaled with the size of the instance for instances with $k_{\min} = 12$.

This approach, however, has a fundamental issue.
Since we used a wide range of instance parameters, it was inevitable that some of the instances were prohibitively hard.
We could put a time limit on the solver but that would skew the results; the harder instances would not be represented in our set.
Our workaround was to add a constraint to the solver restricting $k_{\min}$ to values up to 30; this made any instances with $k_{\min}$ above 30  infeasible and so we disregarded them.
Note that this did not skew the results as the set of instances for each $k_{\min} \le 30$ was unrestricted.

As we expected that the hardness of an instance mainly depends on $r$ and $k_{\min}$ but not as much on the other instance generator parameters, restricting the values of $r$ and $k_{\min}$ was sufficient to avoid overly hard instances thus making data collection feasible.

As we mentioned above, the value of $r$ was fixed to 5.
The values of the other instance generator parameters were randomly sampled from the following ranges: $n, m \in \{ 10, 11, \ldots, 70 \}$, $r_0 \in \{ 5, 6, \ldots, 15 \}$, and $r_u \in \{ 1, 2, \ldots, 5 \}$.
The selection of the parameters was based on typical values used in the literature and to ensure that the \emph{authorisation density} (the proportion of non-zero entries in the $\bff{UPA}$ matrix) was mainly in the range 5--35\%.
(All but one of the real-world datasets commonly used in role mining research have authorisation densities less than $20\%$~\cite[Table 1]{MolloyLLMWL09}.)

We set $f_{i,j} = \mathit{upa}_{i,j}$ to prioritise the security considerations; other settings of $f_{i,j}$ are studied in Section~\ref{sec:instance-types}.

At most one instance per combination of parameters was generated.
In total, we produced 110\,697 instances of which 56\,220 had $k_{\min} \le 30$ and, thus, were included in this study.
For each instance, we recorded the running time of the solver, thus our dataset included the instance parameters, the value of $k_{\min}$ and the solver running time.
In Section~\ref{sec:runtime_model}, we describe how this dataset was used to build a model of the solver running time.
This in turn was used to support our hypothesis that the Gurobi-based solver has FPT-like running time.

\subsection{Running time model}
\label{sec:runtime_model}

Our hypothesis is that the median running time of the Gurobi-based solver can be approximated as a product of two functions: a function of $k_{\min}$ only and a function of the instance size only (as we fixed $r$ in this experiment, it is not a part of the model).
If the second function is a polynomial then we can claim that the Gurobi-based solver has FPT-like running time, meaning that it is suitable for reasonably large instances as long as the value of $k_{\min}$ is small.

We came up with two candidates for the definition of the instance size:
\begin{itemize}
    \item
    The size $mn$ of the UPA matrix; and
	
    \item
    The number of non-zero elements in the UPA matrix, which we denote by $\sigma$.
\end{itemize}

There are arguments for each of these candidate definitions; indeed, the size of the MIP formulation in Figure~\ref{fig:formulation} depends, in various ways, on both $mn$ and $\sigma$.
Thus, we tested both definitions when fitting running time models.
In our experience, the second definition gives a much better fit, hence we use $\sigma$ as the definition of the instance size in the rest of the paper.

Let $f(k_{\min}, \sigma)$ be our model of the median running time of the Gurobi-based solver.
While it is possible to fit $f(k_{\min}, \sigma)$ to all the data points, the result would be affected by the imbalances in our set of instances; for example, the lower values of $k_{\min}$ are represented better in our dataset, and this would be reflected in the model.
Thus, we designed the following process to balance the dataset.
For each $k_{\min} \in \{0, 1, \ldots, 30\}$ and $j \in \{ 1, 2, \ldots \}$, we calculated the median running time $t_{k_{\min},j}$ over all the instances of size $\sigma$ such that $100 (j-1)\le \sigma < 100 j$ and with the given value of $k$.
We then fit $f(k_{\min}, \sigma)$ to $(k_{\min}, 100j - 50, t_{kj})$ for all the combinations of $k_{\min}$ and $j$ where the number of instances within the corresponding range is at least 10.

The running times across our experiments vary from milliseconds to minutes; a residual of a few seconds is acceptable for larger and harder instances but is a poor approximation for small and easy instances.
In other words, when fitting $f(k_{\min}, \sigma)$, we are concerned with the relative error, not the absolute error.
Thus, we use the logarithmic scale for the running times, i.e.\ we fit $\log f(k_{\min}, \sigma)$ to $\log t_{k_{\min}, j}$.

According to our initial assumption, the model was a product of two functions: $f(k_{\min}, \sigma) = f_k(k_{\min}) \cdot f_\sigma(\sigma)$.
By visualising the experimental data, we established that $f_k(k_{\min})$ was exponential in $k_{\min}$ as can be clearly seen in Figure~\ref{fig:sigma-slices}, however the slope of the curves was different for different values of $\sigma$.
In other words, $f_k$ turned out to be a function of both $k_{\min}$ and $\sigma$.
By estimating the slopes, we established that a good model for $f_k$ was $f_k(k_{\min}, \sigma) = c_1 \cdot 2^{k_{\min} / (c_2 \sigma + c_3)}$.

A reasonable fit for $f_\sigma(\sigma)$ could be a parabola or an exponential function with the base close to 1.
We tested both hypotheses and found that a quadratic function fits significantly better than exponential.

Thus, our final model is as follows:
\begin{equation}
\label{eq:best-fit}
f(k_{\min}, \sigma) = c_1 \cdot 2^{k_{\min} / (c_2 \sigma + c_3)} \cdot (\sigma^2 + c_4 \sigma + c_5) \,,
\end{equation}
where $c_1, c_2, \ldots, c_5$ are coefficients.
This model gives a close fit for the experimental data except for small values of $k_{\min}$.
Thus, we removed $k_{\min} < 2$ when fitting the data and obtained the following values of the coefficients: $c_1 \approx 1.11 \cdot 10^{-4}$, $c_2 \approx 3.29 \cdot 10^{-2}$, $c_3 \approx 4.14$, $c_4 \approx -77.3$ and $c_5 \approx 7.23 \cdot 10^3$.

Figure~\ref{fig:3d-fit} shows the fit of $f(k_{\min}, \sigma)$ in three dimensions, whereas Figures~\ref{fig:k-slices} and~\ref{fig:sigma-slices} show slices through the space of $k_{\min}$ and $\sigma$.
Except for small $k_{\min}$, the model accurately predicts the aggregated running times.
It even predicts that, for small $\sigma$ and large $k_{\min}$, the running time slightly decreases as $\sigma$ increases, see Figure~\ref{fig:k-slices}.

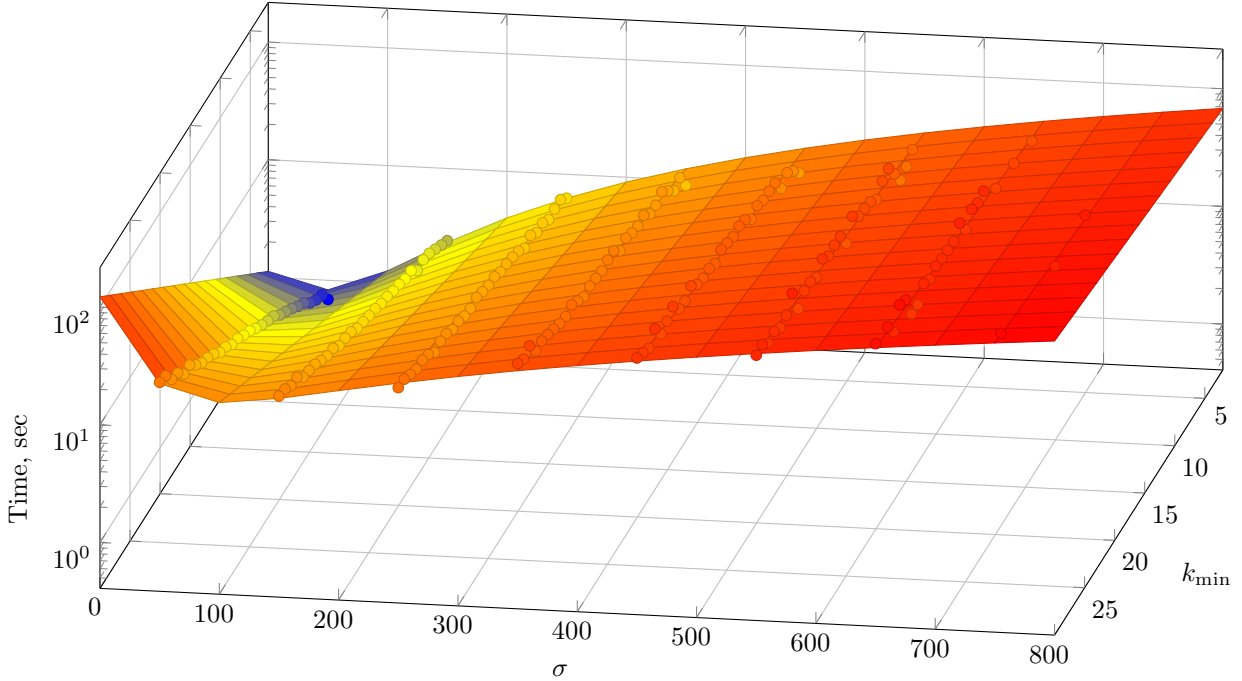
\begin{figure}[htb]
\begin{tikzpicture}
    \begin{axis}[
      xlabel={$k_{\min}$},
      ylabel={$\sigma$},
      zlabel={Time, sec},
      zmode=log,
      width=\textwidth,
      height=10cm,
      view={100}{40},
      xtick={5, 10, 15, 20, 25},
      grid=major
    ]
    \addplot3[scatter, only marks] table[x=k, y=sigma, z=time] {data/3d_fit.txt};
    \addplot3[mark=none, draw, domain=2:30, domain y=0:800, samples=17, surf, mesh/ordering=y varies, opacity=0.5] {1.11264829e-04 * (2^(x / (3.29453098e-02 * y + 4.13642674e+00))) * (y^2 + -7.73164390e+01*y + 7.22783449e+03)};
    \end{axis}
\end{tikzpicture}

\caption{This graph demonstrates how our model $f(k_{\min}, \sigma)$ (surface) fits the data aggregated into $t_{k_{\min}, j}$ (scatter plot).
The colour represents the time (the value along the vertical axis).}
\label{fig:3d-fit}
\end{figure}

\newcommand{\slicekplot}[2]{
\nextgroupplot[
    title={$k = #1$}, 
    title style={
        at={(0.5,0.0)}, 
        anchor=south, 
        inner sep=3pt, 
        fill=white, 
        draw=black, 
        rectangle
    },
    legend to name=grouplegend-#1, 
    legend columns=3, 
    legend style={column sep=1em}, 
    legend cell align=left,
    #2
]
    \addplot[only marks, mark size=1, mark options={fill=blue}] table[
        x=sigma,
        y=time,
    ] {data/slices/slice_k#1.txt};
    \addlegendentry{\hspace{-1em}Median times $t_{k,j}$}

    \addplot[red, ultra thick] table[
        x=sigma,
        y=fit,
    ] {data/slices/slice_k#1.txt};
    \addlegendentry{\hspace{-1em}$f(k, \sigma)$}
}

\begin{figure}[htb]
\begin{tikzpicture}
\begin{groupplot}[
    group style={
        group size=3 by 3,
        horizontal sep=1.1cm,
        vertical sep=1cm,
    },
    width=0.35\textwidth,
    height=3.5cm,
    ymode=log,
    xlabel={$\sigma$},
    ylabel={Time, sec},
    xmin=0,
    xmax=1300,
    ymin=0.1,
    ymax=200,
    grid=major
]    
    \slicekplot{0}{xlabel={}}
    \coordinate (point1) at (rel axis cs:0.5,1.05);
    \slicekplot{1}{xlabel={}, ylabel={}}
    \slicekplot{2}{xlabel={}, ylabel={}}
    \coordinate (point2) at (rel axis cs:0.5,1.05);

    \slicekplot{3}{xlabel={}}
    \slicekplot{4}{xlabel={}, ylabel={}}
    \slicekplot{6}{xlabel={}, ylabel={}}
    
    \slicekplot{10}{}
    \slicekplot{20}{ylabel={}}
    \slicekplot{30}{ylabel={}}
    
\end{groupplot}

\coordinate (midpoint) at ($(point1)!0.5!(point2)$);
\node[inner sep=0pt, anchor=south] at (midpoint) {\pgfplotslegendfromname{grouplegend-1}};
\end{tikzpicture}

\caption{The aggregated data $t_{k, j}$ and the best fit model $f(k, \sigma)$ sliced along the $k$ axis.}
\label{fig:k-slices}
\end{figure}

\newcommand{\sliceonesplot}[3]{
    \nextgroupplot[
        title={$\sigma \in [#2]$}, 
        legend to name=grouplegendones-#1, 
        legend columns=3, 
        legend style={column sep=1em}, 
        legend cell align=left,
        title style={
            at={(0.5,0)}, 
            anchor=south, 
            inner sep=3pt, 
            fill=white, 
            draw=black, 
            rectangle
        },        
        #3
    ]
    
    \addplot[only marks, mark size=1, mark options={fill=blue}] table[
        x=k,
        y=time,
    ] {data/slices/slice_ones#1.txt};
    \addlegendentry{\hspace{-1em}Median times $t_{k, j}$}

    \addplot[red, ultra thick] table[
        x=k,
        y=fit,
    ] {data/slices/slice_ones#1.txt};
    \addlegendentry{\hspace{-1em}$f(k, \sigma)$}
}

\begin{figure}[htb]
\begin{tikzpicture}
\begin{groupplot}[
    group style={
        group size=3 by 2,
        horizontal sep=1.3cm,
        vertical sep=1cm,
    },
    width=0.35\textwidth,
    height=4cm,
    ymode=log,
    xlabel={$k_{\min}$},
    ylabel={Time, sec},
    xmin=0,
    xmax=30,
    ymin=0.1,
    ymax=200,
    grid=major,
]
    
    \sliceonesplot{0}{0, 99}{xlabel={}}
    \coordinate (point1) at (rel axis cs:0.5,1.05);

    \sliceonesplot{1}{100, 199}{xlabel={}, ylabel={}}
    
    \sliceonesplot{2}{200, 299}{xlabel={}, ylabel={}}
    \coordinate (point2) at (rel axis cs:0.5,1.05);

    \sliceonesplot{3}{300, 399}{}
    
    
    \sliceonesplot{5}{500, 599}{ylabel={}}
    
    
    \sliceonesplot{7}{700, 799}{ylabel={}}
    
\end{groupplot}

\coordinate (midpoint) at ($(point1)!0.5!(point2)$);

\node[inner sep=0pt, anchor=south] at (midpoint) {\pgfplotslegendfromname{grouplegendones-1}};

\end{tikzpicture}

\caption{The aggregated data $t_{k_{\min}, j}$ and the best fit model $f(k_{\min}, \sigma)$ sliced along the $\sigma$ axis.}
\label{fig:sigma-slices}
\end{figure}

So far, we focused on the median running times.
To see how far the actual running times fall from the predicted ones, we produced one more visualisation, see Figure~\ref{fig:unified}.
It is designed to demonstrate that our $f_\sigma(\sigma)$ is a good fit for the data across all values of $k_{\min} \ge 2$.
To compensate for the differences caused by the variation of $k_{\min}$, we divided the running times by $f_k(k_{\min}, \sigma)$.
One can see that all the instances fall into a relatively narrow interval around our model $f_\sigma(\sigma)$, except for small $\sigma$.
Similarly to the case of small $k_{\min}$, we assume that small $\sigma$ may significantly change the behaviour of Gurobi.

Nevertheless, it is clear from this visualisation that a good model for $f_\sigma(\sigma)$ has to be a polynomial; an exponential dependence would be a straight line in this semi-logarithmic plot.
Our model accurately approximates the data for all $\sigma$ and $k_{\min}$ except for several smallest values.

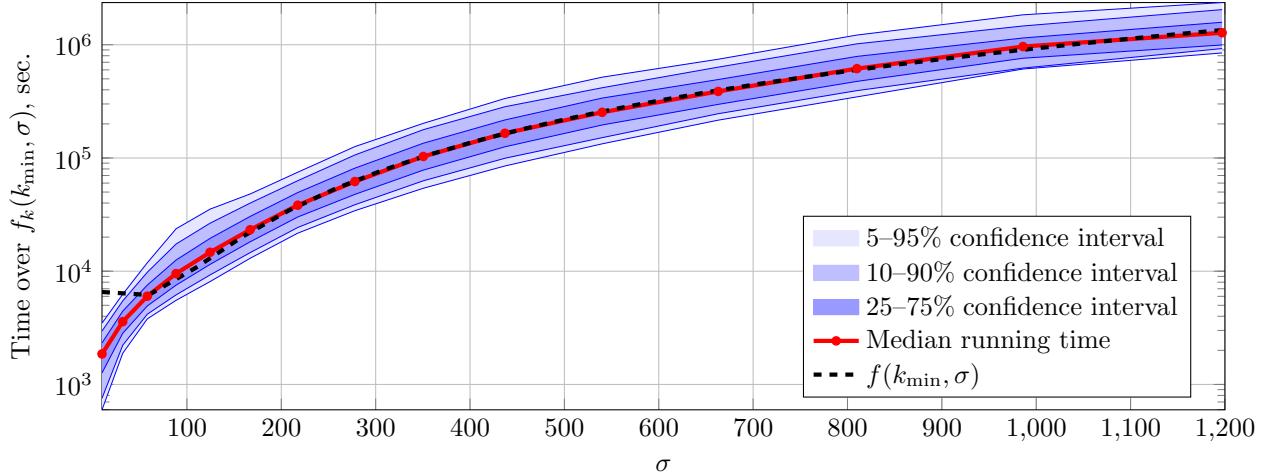
\begin{figure}[htb]
\begin{tikzpicture}
\begin{semilogyaxis}[
    xlabel={$\sigma$},
    ylabel={Time over $f_k(k_{\min}, \sigma)$, sec.},
    grid=major,
    width=\textwidth,
    height=7cm,
    every axis plot/.append style={mark=none, ultra thin},
    legend pos={south east},
    legend cell align=left,
    enlargelimits=false
]
    \addplot[blue, name path=p5, forget plot] table[x=middle, y=p5] {data/unified_percentiles.txt};
    \addplot[blue, name path=p10, forget plot] table[x=middle, y=p10] {data/unified_percentiles.txt};
    \addplot[blue, name path=p25, forget plot] table[x=middle, y=p25] {data/unified_percentiles.txt};
    \addplot[blue, name path=p75, forget plot] table[x=middle, y=p75] {data/unified_percentiles.txt};
    \addplot[blue, name path=p90, forget plot] table[x=middle, y=p90] {data/unified_percentiles.txt};
    \addplot[blue, name path=p95, forget plot] table[x=middle, y=p95] {data/unified_percentiles.txt};
    
    \addplot[blue!10] fill between[of=p5 and p10];
    \addlegendentry{5--95\% confidence interval}
    \addplot[blue!25] fill between[of=p10 and p25];
    \addlegendentry{10--90\% confidence interval}
    \addplot[blue!40] fill between[of=p25 and p75];
    \addlegendentry{25--75\% confidence interval}
    \addplot[blue!25, forget plot] fill between[of=p75 and p90];
    \addplot[blue!10, forget plot] fill between[of=p90 and p95];

    \addplot[red, ultra thick, mark=*, mark size=1pt] table[x=middle, y=p50] {data/unified_percentiles.txt};
    \addlegendentry{Median running time}
    \addplot[dashed, ultra thick, black, domain=10:1200, mark=none] {x^2 - 7.73164390e+01 * x + 7.22783449e+03};
    \addlegendentry{$f(k_{\min}, \sigma)$}
\end{semilogyaxis}
\end{tikzpicture}

\caption{This graph demonstrates the fit of $f_\sigma(\sigma)$ to the data for all $2 \le k_{\min} \le 30$.
The times are divided by $f_k(k_{\min}, \sigma)$ to compensate for the varying $k_{\min}$.
Blue areas show the distribution of the data; blue lines are percentiles of the running times.}
\label{fig:unified}
\end{figure}

\subsection{Conclusions about the FPT-like behaviour}

We conclude that~(\ref{eq:best-fit}) is a good fit to the experimental data for $k_{\min} \ge 2$.
The instances for $k_{\min} < 2$ are being solved notably faster than our model predicts.
We hypothesise that this behaviour is linked to the use of heuristics within the Gurobi solver, however further investigation is needed to confirm this.

An algorithm is formally called FPT if its worst-case time complexity is $\cO(f_k(k_{\min}) \cdot f_\sigma(\sigma))$, where $f_\sigma(\sigma)$ is a polynomial.
In our model of the \emph{median} running times, $f_\sigma(\sigma)$ is indeed polynomial but $f_k$ depends not only on $k_{\min}$ but also on $\sigma$.
However, the dependence on $\sigma$ is inverted; the larger the $\sigma$, the smaller the $f_k(k_{\min}, \sigma)$.
Thus, we can substitute $\sigma = 1$ to~(\ref{eq:best-fit}) to obtain an upper bound for the median running times of the Gurobi solver of the standard FPT form: $f(k_{\min}, \sigma) = \cO(2^{k_{\min} / (c_2 + c_3)} \cdot (\sigma^2 + c_4 \sigma + c_5))$.

An interesting research question is to understand why the exponent in $f_k(k_{\min}, \sigma)$ is inversely proportional to $\sigma$.
If we ignore $c_3$, we see that the exponent in $f_k(k_{\min}, \sigma)$ is the proportion of values in the $\bff{UPA}$ matrix that are adjusted by the role mining: $k_{\min} / \sigma$.
Our theory does not explain this phenomenon but it is an interesting observation that deserves future research; for example, there could be parameterisation of the problem based on this ratio.

Overall, these results confirm that general-purpose solvers can be effective on FPT problems. Moreover, they suggest that our solver might be appropriate for relatively large instances, provided the value of the parameter $k_{\min}$ is small enough.
Further research may build a model of the solver's running time as a function of three parameters: $\sigma$, $k_{\min}$ and $r$.

\section{Computational experiments with real-world instances}
\label{sec:real-world}

To test our solver and study the trade-off between $k_{\min}$ and $r$ in BO-GNRM, we use a set of real-world benchmark instances~\cite{Ene2008}.
This set includes nine instances of various sizes.
The largest instance \emph{americas large} has $m = 3\,485$, $n = 10\,127$, $\sigma = 185\,294$ and around 400 roles according to the best heuristical solutions in the literature.
Our approach is impractical for such large instances as the size of the formulation would be prohibitive.
However, some other instances in this benchmark set are more manageable.
The details of the instances we use in this section are given in Table~\ref{tab:instances}.

\begin{table}[htb]
\begin{tabular}{lrrrrr}
\toprule
Instance & $m$ & $n$ & $\sigma$ & density & $r_\text{min}$ \\
\midrule
Domino & 79 & 231 & 730 & 4.0\% & 20 \\
Firewall 2 & 325 & 590 & 36\,428 & 19.0\% & 10 \\
Healthcare & 46 & 46 & 1\,486 & 70.2\% & 14 \\
\bottomrule
\end{tabular}

\caption{Real-world instances used in this section.  The last column gives the minimum number of roles $r_\text{min}$~\cite{Ene2008}; this is for RMP (i.e., $k_{\min} = 0$).}
\label{tab:instances}
\end{table}

In classic role mining, the $\bff{UPA}$ matrix is sufficient to define the instance.
In GNRM, we also need to define the value of $r$ (the number of roles to be extracted) and the matrix $\bff{F}$.

In our experiments, we use three types of the $\bff{F}$ matrix:
\begin{description}
	\item[Security:] $f_{i,j} = \mathit{upa}_{i,j}$; the new roles may remove a permission from a user but can never add a new permission;
	\item[Availability:] $f_{i,j} = 1 - \mathit{upa}_{i,j}$; the new roles may add a permission to a user but can never remove an existing permission;
	\item[Noise:] $f_{i,j} = 1$ for every $i$ and $j$; the new roles may both remove and add permissions.
\end{description}

In Section~\ref{sec:performance}, we focus on the Security instances.
We compare all three types of the $\bff{F}$ matrix in Section~\ref{sec:instance-types}.

Our Gurobi-based solver can be used as an exact solver, but it can also serve as a heuristic if we specify the time budget.
In this section, we use it as a heuristic, however in some cases it proves the optimality of the solutions within the provided time budget.


\subsection{Performance of the solver}
\label{sec:performance}

Figure~\ref{fig:real-k-vs-r} shows the trade-off between $k_{\min}$ and $r$ for several real-world instances.
We experimented with three time budgets: 100 sec, 1\,000 sec and 10\,000 sec.
All the solutions that were proven optimal by Gurobi within the time budget were marked with circles.
Thus, for example, being given 10000 seconds, the solver proved optimality of solutions for $k_{\min} = 1, 2, 3, 4, 19$ for the \emph{Domino} instance.
Note that the vertical axes in these plots are logarithmic while $k_{\min}$ can take any integer value starting from $k_{\min} = 0$.
Thus, we added value $k_{\min} = 0$ to the logarithmic scale.

\pgfplotsset{
	tb100/.style={color=green},
	tb1000/.style={color=red},
	tb10000/.style={color=blue, dashed, dash pattern=on 3pt off 3pt},
	tb10000avail/.style={color=olive},
	tb10000noise/.style={color=orange, dashed, dash pattern=on 3pt off 3pt, dash phase=3pt},
	tradeoff line/.style={thick, mark=none},
	proven marks/.style={only marks, mark=o, mark options={thin, solid}},
	types line/.style={thick},
	ticks1/.style={
	    ytick=    {0.1, 1, 1e1, 1e2, 1e3, 1e4, 1e5},
    		yticklabels={0, $10^0$, $10^1$, $10^2$, $10^3$, $10^4$, $10^5$},
		minor ytick = {0,
						1e0, 2e0, 3e0, 4e0, 5e0, 6e0, 7e0, 8e0, 9e0, 
						1e1, 2e1, 3e1, 4e1, 5e1, 6e1, 7e1, 8e1, 9e1, 
						1e2, 2e2, 3e2, 4e2, 5e2, 6e2, 7e2, 8e2, 9e2, 
						1e3, 2e3, 3e3, 4e3, 5e3, 6e3, 7e3, 8e3, 9e3, 
						1e4, 2e4, 3e4, 4e4, 5e4, 6e4, 7e4, 8e4, 9e4, 
						1e5, 2e5, 3e5, 4e5, 5e5, 6e5, 7e5, 8e5, 9e5, 1e6}
	},
}

\newcommand{\kvsr}[5]{\addplot [#5] table [x=r, y expr={\thisrow{mink} == 0 ? 0.1 : \thisrow{mink}}, restrict expr to domain={\thisrow{time_budget}}{#2:#2}, restrict expr to domain={\thisrow{f_type}}{#3:#3}, restrict expr to domain={\thisrow{proven}}{#4}, unbounded coords=discard] {data/cache_gurobi_mink_#1.txt};}

\newcommand{\minkvsr}[2]{
    \nextgroupplot[
        xlabel={$r$},
        ylabel={$k_{\min}$},
        title style={
            at={(0.5,0.92)}, 
            anchor=north, 
            inner sep=3pt, 
            fill=white, 
            draw=black, 
            rectangle
        },
	legend to name=grouplegend-#1, 
	legend columns=3, 
	legend style={column sep=1em}, 
	legend cell align=left,
        #2
    ]
	\kvsr{#1}{100}{1}{0:1}{tb100, tradeoff line}
	\addlegendentry{\hspace{-1em}Time budget 100 sec}	

	\kvsr{#1}{1000}{1}{0:1}{tb1000, tradeoff line}
	\addlegendentry{\hspace{-1em}Time budget 1\,000 sec}	

	\kvsr{#1}{10000}{1}{0:1}{tb10000, tradeoff line}
	\addlegendentry{\hspace{-1em}Time budget 10\,000 sec}	

	\kvsr{#1}{100}{1}{1:1}{tb100, proven marks, mark size=3pt}
	\addlegendentry{\hspace{-1em}Proven (100 sec)}	

	\kvsr{#1}{1000}{1}{1:1}{tb1000, proven marks, mark size=2pt}
	\addlegendentry{\hspace{-1em}Proven (1\,000 sec)}	

	\kvsr{#1}{10000}{1}{1:1}{tb10000, proven marks, mark size=1pt}
	\addlegendentry{\hspace{-1em}Proven (10\,000 sec)}	
}

\begin{figure}[htb]
\begin{tikzpicture}
\begin{groupplot}[
    group style={
    group size=3 by 1,
	horizontal sep=1cm,
	vertical sep=2cm,		
    },
    width=0.37\textwidth,
    height=6cm,
    grid=major,
    ymode=log,
    ymin=0.1,
    xmin=0,
    xtick distance=5,
    minor tick num=4,
]
    \minkvsr{domino}{ticks1, title=Domino, xmax=25.5}
    \coordinate (point1) at (rel axis cs:0.5,1.05);
    
    \minkvsr{fire2}{ticks1, title=Firewall 2, ylabel={}}
    
    \minkvsr{hc}{ticks1, title=Healthcare, ylabel={}, xmax=17.5}
    \coordinate (point2) at (rel axis cs:0.5,1.05);
\end{groupplot}

\coordinate (midpoint) at ($(point1)!0.5!(point2)$);

\node[inner sep=0pt, anchor=south] at (midpoint) {\pgfplotslegendfromname{grouplegend-domino}};
\end{tikzpicture}

\caption{The trade-off between $k_{\min}$ and $r$ for several real-world instances (Security type).
Each line corresponds to the values obtained by our Gurobi-based solver within the given time budget (100, 1000 or 10000 seconds).
When the time budget was sufficient to prove optimality of the solution, a corresponding mark was added.
Note that we added value $k = 0$ to the vertical axis.
}
\label{fig:real-k-vs-r}
\end{figure}
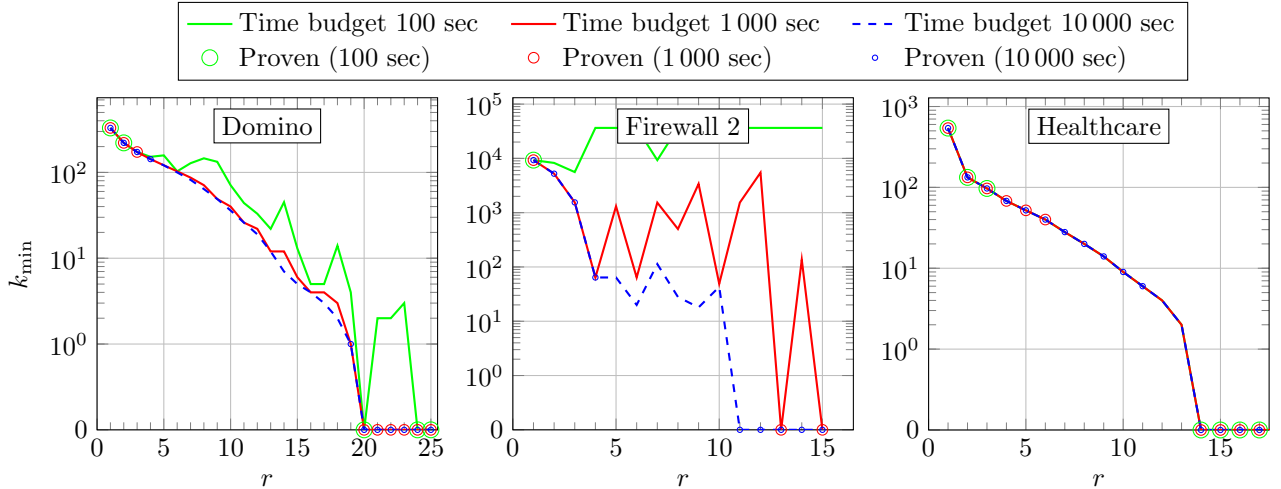

Generally, increasing the value of $r$ decreases the value of $k$; indeed, the \emph{optimal} value of $k_\text{min}(r)$ is a monotonically decreasing function of $r$.
However, there are exceptions to this rule in Figure~\ref{fig:real-k-vs-r}, particularly for smaller time budgets.
In many cases, increasing the value of $r$ makes the problem harder as the formulation size increases, and the solver might not be able to find a near-optimal solution within the time budget.
In extreme cases, the solver only finds the trivial solution $\mathit{ua}_{i, \ell} = \mathit{ua}_{\ell, j} = 0$ for every $i$, $j$ and $\ell$.

Note that, formally speaking, Figure~\ref{fig:real-k-vs-r} does not show solutions to BO-GNRM\@.
In practice, one would want to remove the dominated solutions to obtain the Pareto front and set the upper bounds $\bar{k}$ and $\bar{r}$.
These steps, however, would hide some of the aspects of the solver's behaviour, hence we decided to include all the solutions here and below.

Naturally, lower time budgets produce worse solutions in general.
However, the difference between the 1\,000~sec time budget and the 10\,000~sec time budget is often insignificant.
This hints at the proximity of the obtained solutions to the optimal solutions.

\subsection{Comparison of instance types}
\label{sec:instance-types}

So far, we experimented with the Security instances only.
In this section, we will study how changing the instance type (Security, Availability, and Noise) affects the properties of the instances.
We will use the same real-world $\bff{UPA}$ matrices but we will solve instances with various $\bff{F}$ matrices.

Figure~\ref{fig:real-instance-types} shows the trade-off between $r$ and $k$ for each instance type, for several real-world instances.
As in Section~\ref{sec:performance}, the vertical axes are logarithmic with added $k = 0$.
We use the 10\,000 seconds time budget for this experiment to be as close to the optimal solutions as we can.

\newcommand{\comparef}[2]{
    \nextgroupplot[
        title style={
            at={(0.5,0.92)}, 
            anchor=north, 
            inner sep=3pt, 
            fill=white, 
            draw=black, 
            rectangle
        },
	legend to name=grouplegendtypes-#1, 
	legend columns=3, 
	legend style={column sep=1em}, 
	legend cell align=left,
	#2
    ]
	
	\kvsr{#1}{10000}{1}{0:1}{tb10000, types line}
	\addlegendentry{\hspace{-1em}Security}

	\kvsr{#1}{10000}{2}{0:1}{tb10000avail, types line}
	\addlegendentry{\hspace{-1em}Availability}	

	\kvsr{#1}{10000}{3}{0:1}{tb10000noise, types line}
	\addlegendentry{\hspace{-1em}Noise}	

	\kvsr{#1}{10000}{1}{1:1}{tb10000, proven marks, mark size=1pt}
	\addlegendentry{\hspace{-1em}Proven (Security)}	

	\kvsr{#1}{10000}{2}{1:1}{tb10000avail, proven marks, mark size=2pt}
	\addlegendentry{\hspace{-1em}Proven (Availability)}	

	\kvsr{#1}{10000}{3}{1:1}{tb10000noise, proven marks, mark size=3pt}
	\addlegendentry{\hspace{-1em}Proven (Noise)}	
}

\begin{figure}[htb]
\begin{tikzpicture}
\begin{groupplot}[
    group style={
        group size=3 by 1,
	horizontal sep=1cm,
	vertical sep=2cm,
    },
    xlabel={$r$},
    ylabel={$k_{\min}$},
    width=0.37\textwidth,
    height=6cm,
    grid=major,
    ymode=log,
    ymin=0.1,
    xmin=0,
    xtick distance=5,
    minor tick num=4,
]

    \comparef{domino}{ticks1, title=Domino, xmax=25.5}
    \coordinate (point1) at (rel axis cs:0.5,1.05);
    
    \comparef{fire2}{ticks1, title=Firewall 2, ylabel={}}

    \comparef{hc}{ticks1, title=Healthcare, ylabel={}, xmax=17.5}
    \coordinate (point2) at (rel axis cs:0.5,1.05);
\end{groupplot}

\coordinate (midpoint) at ($(point1)!0.5!(point2)$);

\node[inner sep=0pt, anchor=south] at (midpoint) {\pgfplotslegendfromname{grouplegendtypes-domino}};
\end{tikzpicture}

\caption{The effect of the instance type (Security, Availability or Noise) on the trade-off between $k_{\min}$ and $r$.
The time budget is 10\,000~sec.
Note that we added value $k_{\min} = 0$ to the vertical axis.}
\label{fig:real-instance-types}
\end{figure}
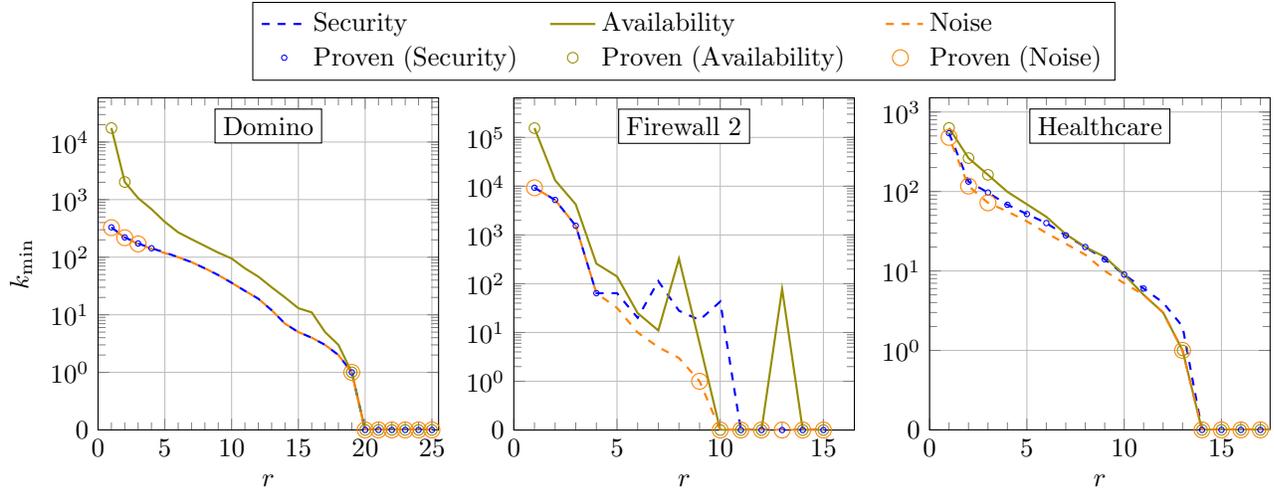

Noise instances are most flexible, hence, as expected, their $k$ does not exceed that of the Security and Availability instances.
The gap between Noise instances and Security/Availability instances depends on the $\bff{UPA}$ matrix.
For \emph{Domino}, the gap between Security and Noise instances is minimal (the ratio between their values of $k_{\min}$ does not exceed 1.025) whereas the gap between Availability and Noise instances is often significant (the ratio goes up to 53 for small $r$).
Intuitively, this makes sense considering that the \emph{Domino} $\bff{UPA}$ matrix is very sparse; to ensure that every value `1' is preserved, one may need to sacrifice many `0's.
For the \emph{Healthcare} instances, we observe that solutions to both Security and Availability instances are relatively close to the solutions to the Noise instances; the ratio never exceeds 2.3.
This is consistent with the observation that the \emph{Healthcare} $\bff{UPA}$ matrix is relatively dense.
It is difficult to make any conclusions for the \emph{Firewall~2} $\bff{UPA}$ matrix as some of the solutions are likely to be relatively far from optimal; this is evident from the jumps in the Security and Availability curves.


Finally, judging by the ability of the solver to prove optimality within the 10\,000 seconds time budget, the instances with small $r$ or $k_{\min}$ are easier than the instances with both $r$ and $k_{\min}$ being relatively large.
In other words, we found that the solver performs well on instances with either of the parameters being small whereas our theoretical results suggest that parameterisation by either of them is not efficient.
Our interpretation of these results is that the problem is hard even for small $k_{\min}$ or $r$ but it is particularly hard (from the practical point of view) when both $k_{\min}$ and $r$ are large.

\section{Concluding Remarks}
\label{sec:conclusion}

This paper introduces the {\sc Generalized Noise Role Mining} problem (GNRM).
We believe this is a useful contribution to the literature on role mining, not least because it allows us to define security- and availability-aware role mining problems.
We extended GNRM to BO-GNRM, a bi-objective optimization variant of GNRM that allows us to find an optimal balance between the number of roles $r$ and the number $k$ of  deviations from the
original authorization matrix.

We have shown that GNRM and BO-GNRM are fixed-parameter tractable, which means that they can be solved in a reasonable amount of time, provided problem instances only require solutions in which $r+k$ ($\bar{r}+\bar{k},$ respectively) are relatively small.
Algorithms for role mining do not need to be particularly fast, but they cannot be exponential in the size of the input, given the size of typical instances.
Knowing that algorithms exist that do solve role mining problems, subject to certain constraints on the solution, provides grounds for cautious optimism about the feasibility of solving real-world role mining problems.

Our experimental work provides further cause for optimism.
In particular, we show that our solution method based on a general-purpose solver efficiently exploits the FPT structure of the problem allowing us to solve several real-world instances, sometimes even proving solution optimality.
Also, the results on security-aware role mining suggest that it is possible to find solutions relatively quickly if we require those solutions to preserve the security of the original configuration, something that is generally desirable.

The work in this paper provides plenty of scope for future work. 
In particular, we would like to explore whether our work on matrix decomposition and FPT results can be extended to other variants of role mining (see~\cite[Section 2]{MitraSVA16}).
More generally, this paper along with~\cite{CramptonGKW17,CurreyMDG20}
demonstrates usefulness of multi-objective optimization in access control. Thus, it may be useful to  apply multi-objective optimization to other problems in the area.

\bibliographystyle{acm}

\bibliography{references}

\end{document}